\documentclass[oneside,english]{amsart}
\usepackage[T1]{fontenc}
\usepackage[latin9]{inputenc}
\usepackage{color}
\usepackage{amsthm}
\usepackage{amssymb}
\usepackage{graphicx}

\makeatletter

\providecommand{\tabularnewline}{\\}

\numberwithin{equation}{section}
\numberwithin{figure}{section}
\theoremstyle{plain}
\newtheorem{thm}{\protect\theoremname}
  \theoremstyle{plain}
  \newtheorem{lem}[thm]{\protect\lemmaname}
  \theoremstyle{definition}
  \newtheorem{defn}[thm]{\protect\definitionname}


\usepackage{babel}
\providecommand{\lemmaname}{Lemma}
\providecommand{\theoremname}{Theorem}

\makeatother

\usepackage{babel}
  \providecommand{\definitionname}{Definition}
  \providecommand{\lemmaname}{Lemma}
\providecommand{\theoremname}{Theorem}

\begin{document}

\title{Two-Walks degree assortativity in Graphs and Networks}

\author{Alfonso Allen-Perkins, Juan Manuel Pastor, Ernesto Estrada}

\begin{abstract}

Degree ssortativity is the tendency for nodes of high degree (resp. low degree) in a graph to be connected to high degree nodes (resp. to low degree ones). It is usually quantified by the Pearson correlation coefficient of the degree-degree correlation. Here we extend this concept to account for the effect of second neighbours to a given node in a graph. That is, we consider the two-walks degree of a node as the sum of all the degrees of its adjacent nodes. The two-walks degree assortativity of a graph is then the Pearson correlation coefficient of the two-walks degree-degree correlation. We found here analytical expression for this two-walks degree assortativity index as a function of contributing subgraphs. We then study all the 261,000 connected graphs with 9 nodes and observe the existence of assortative-assortative and disassortative-disassortative graphs according to degree and two-walks degree, respectively. More surprinsingly, we observe a class of graphs which are degree disassortative and two-walks degree assortative. 
We explain the existence of some of these graphs due to the presence of certain topological features, such as a node of low-degree connected to high-degree ones. More importantly, we study a series of 49 real-world networks, where we observe the existence of the disassortative-assortative class in several of them. In particular, all biological networks studied here were in this class. We also conclude that no graphs/networks are possible with assortative-disassortative structure.

\end{abstract}

\maketitle

\section{Introduction}

Networks represent the topological skeleton of a wide range of systems
in nature and society \cite{Albert02,Newman03a,Boccaletti06,Strogatz01}.
The characterization of their structure is crucial since it shapes
the evolutionary, functional, and dynamical processes that take place
in those systems \cite{Strogatz01,Barrat08,Costa07}.

It is well known that links generally do not connect nodes regardless
of their characteristics. In social networks, for instance, evidence
suggests that individuals prefer to associate with others of similar
age, religion, education or occupation as themselves \cite{McPherson01}.
Assortativity or assortative mixing is a graph metric that refers
to the tendency for nodes in networks to be connected to other nodes
that are similar (or different) to themselves in some way \cite{Newman03b}.
Typically, it is determined for the \textit{degree} (i.e. the number
of direct neighbours, \textit{k}) of the nodes in the network \cite{Newman02,Pastor01,Estrada11,Piraveenan08}.
The tendency for high-degree nodes to associate preferentially with
other high-degree nodes plays a major role in many important processes,
such as epidemic spreading, synchronization or network robustness,
among others \cite{Newman02,Newman03c,Eguiluz02,Boguna03,Dibernardo06}.
However, assortativity may be applied to any characteristics of a
node, including non-topological vertex properties, such as language
or race \cite{Newman03b}. Most of the research done in this area
has been summarized in the review of Noldus \textit{et al.} \cite{Noldus15}. Other extensions to account for interactions beyond the nearest-neighbours have also been proposed in the recent literature \cite{Allen16}.

The aim of this work is to define an assortativity index that captures the influence of first and second neighbours of a node. We then express this two-walks assortativity in terms of the subgraphs contributing to it.

The paper is organized as follows. In Section \ref{Sec:Prel}, the
preliminaries are presented. In Section \ref{Sec:Neighbourhood}, the concept of two-walks degree assortativity is introduced and analysed. Main result is demostrated in Section
\ref{Sec:Main}. Numerical results are presented in Section \ref{Sec:Num}. Conclusions
are summarized in Section \ref{Sec:Conclusion}.

\section{Preliminaries}

\label{Sec:Prel}

Here we consider simple, undirected graphs $G=\left(V,E\right)$,
i.e., graphs without multiple edges, self-loops, directions or weights
in their edges. The notation used is standard and the reader can check
for instance \cite{Estrada12}. Let us define some of the measures used
in this work in order to make it self-contained. First, we define
the degree assortativity index \cite{Newman03b}. Mathematically, it is written
as: 
\begin{equation}
r_{k}=\dfrac{\frac{1}{m}\sum_{\left(i,j\right)\in E}k_{i}k_{j}-\left\{ \frac{1}{m}\sum_{\left(i,j\right)\in E}\frac{1}{2}\left[k_{i}+k_{j}\right]\right\} ^{2}}{\frac{1}{m}\sum_{\left(i,j\right)\in E}\frac{1}{2}\left[k_{i}^{2}+k_{j}^{2}\right]-\left\{ \frac{1}{m}\sum_{\left(i,j\right)\in E}\frac{1}{2}\left[k_{i}+k_{j}\right]\right\} ^{2}}\label{eq:r_k}
\end{equation}
where $k_{i}$ and $k_{j}$ are the degrees at both ends of
link $\left(i,j\right)\in E$ and $m$ is the number of links. A positive assortativity
index $r_{k}>0$ indicates the tendency of higher degree nodes in
the graph to be connected to other higher degree nodes. On the other
hand, $r_{k}<0$ indicates the tendency of higher degree nodes to
be connected to lower degree nodes. It was previously proved the following
result \cite{Estrada11}.

\begin{lem}
Let $G=\left(V,E\right)$ be a simple graph. Let $k_{i}$ be the degree
of the vertex $i$. Let $|P_{1}|$ the number of edges, $|P_{2}|$
and $|P_{3}|$ the paths of length two and three, respectively, $|C_{3}|$
be the number of triangles in $G$. Then, the assortativity coefficient
can be written combinatorially as:

\begin{equation}
r_{k}=\frac{|P_{3}|+3|C_{3}|-\frac{|P_{2}|^{2}}{|P_{1}|}}{|P_{2}|+3|S_{1,3}|-\frac{|P_{2}|^{2}}{|P_{1}|}}\label{eq:r_k_P}
\end{equation}

Let $|P_{r/s}|$ the ratio $|P_{r}|/|P_{s}|$, $|S_{1,3}|$ the number
of star graphs of four nodes, and $C=3|C_{3}|/|P_{2}|$. Then: 
\begin{enumerate}
\item assortative ($r>0$): if and only if $|P_{3/2}|+C>|P_{2/1}|$, 
\item neutral ($r=0$): if and only if $|P_{3/2}|+C=|P_{2/1}|$, and $3|S_{1,3}|-|P_{2}|(|P_{2/1}|-1)\neq0$,
and 
\item disassortative ($r<0$): if and only if $|P_{3/2}|+C<|P_{2/1}|$ 
\end{enumerate}
\end{lem}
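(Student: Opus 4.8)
The plan is to rewrite the three degree sums appearing in \eqref{eq:r_k} --- two sums of powers of degrees over the vertices and one ``mixed'' sum over the edges --- as counts of small subgraphs, and then finish by elementary algebra; part~2 will then follow by tracking the sign of each piece of \eqref{eq:r_k_P}.

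First I would pass from edge sums to vertex sums. Since each vertex $v$ is incident with exactly $k_{v}$ edges, $\sum_{(i,j)\in E}(k_{i}+k_{j})=\sum_{v}k_{v}^{2}$ and $\sum_{(i,j)\in E}(k_{i}^{2}+k_{j}^{2})=\sum_{v}k_{v}^{3}$. Combining this with $|P_{1}|=\tfrac12\sum_{v}k_{v}$, $|P_{2}|=\sum_{v}\binom{k_{v}}{2}$ and $|S_{1,3}|=\sum_{v}\binom{k_{v}}{3}$ (a copy of $K_{1,3}$ is a centre together with an unordered triple of its neighbours) yields the power-sum identities $\sum_{v}k_{v}^{2}=2|P_{2}|+2|P_{1}|$ and $\sum_{v}k_{v}^{3}=6|S_{1,3}|+6|P_{2}|+2|P_{1}|$; hence $\tfrac{1}{m}\sum_{(i,j)\in E}\tfrac12(k_{i}+k_{j})=\tfrac{|P_{2}|+|P_{1}|}{|P_{1}|}$ and $\tfrac{1}{m}\sum_{(i,j)\in E}\tfrac12(k_{i}^{2}+k_{j}^{2})=\tfrac{3|S_{1,3}|+3|P_{2}|+|P_{1}|}{|P_{1}|}$ (recall $m=|P_{1}|$).

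The one genuinely combinatorial point, and the step I expect to need the most care, is the mixed sum $\sum_{(i,j)\in E}k_{i}k_{j}$, which I would evaluate by counting walks of length three in $G$. For each ordered adjacent pair $(v_{1},v_{2})$ there are $k_{v_{1}}$ ways to prepend a vertex $v_{0}$ and $k_{v_{2}}$ ways to append a vertex $v_{3}$, so the number of walks $v_{0}v_{1}v_{2}v_{3}$ with consecutive vertices adjacent equals $2\sum_{(i,j)\in E}k_{i}k_{j}$. I would then partition these walks by which of $v_{0},v_{1},v_{2},v_{3}$ coincide; since consecutive vertices differ, only $v_{0}=v_{2}$, $v_{0}=v_{3}$ or $v_{1}=v_{3}$ can occur, and one checks that this leaves exactly four disjoint families: the degenerate walks $v_{0}v_{1}v_{0}v_{1}$ ($2|P_{1}|$ of them, one per ordered edge); the two ``cherry'' families $v_{0}v_{1}v_{0}v_{3}$ and $v_{0}v_{1}v_{2}v_{1}$ ($2|P_{2}|$ each, using $\sum_{v}k_{v}(k_{v}-1)=2|P_{2}|$); the triangle walks $v_{0}v_{1}v_{2}v_{0}$ ($6|C_{3}|$, each triangle from $3$ starting points in $2$ directions); and the genuine paths on four vertices ($2|P_{3}|$, two orientations each). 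Equating totals, $2\sum_{(i,j)\in E}k_{i}k_{j}=2|P_{1}|+4|P_{2}|+6|C_{3}|+2|P_{3}|$, i.e. $\sum_{(i,j)\in E}k_{i}k_{j}=|P_{1}|+2|P_{2}|+3|C_{3}|+|P_{3}|$. Making this case split exhaustive and non-overlapping, and weighting each path and triangle by its correct number of oriented traversals while keeping the degenerate walks separate, is the delicate part.

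Substituting the three evaluations into \eqref{eq:r_k} and multiplying numerator and denominator by $|P_{1}|$, one expands $\big(\tfrac{|P_{2}|+|P_{1}|}{|P_{1}|}\big)^{2}=\tfrac{|P_{2}|^{2}}{|P_{1}|^{2}}+\tfrac{2|P_{2}|}{|P_{1}|}+1$; after the lower-order terms combine, the numerator collapses to $|P_{3}|+3|C_{3}|-|P_{2}|^{2}/|P_{1}|$ and the denominator to $|P_{2}|+3|S_{1,3}|-|P_{2}|^{2}/|P_{1}|$, which is \eqref{eq:r_k_P}. For part~2, note that the denominator of \eqref{eq:r_k} is a variance (of the degree at the end of a uniformly random half-edge), hence nonnegative, and it is strictly positive precisely when $G$ is not regular, which is exactly the case in which $r_{k}$ is defined; since the denominator of \eqref{eq:r_k_P} is $|P_{1}|$ times it, $\mathrm{sgn}(r_{k})=\mathrm{sgn}\!\big(|P_{3}|+3|C_{3}|-|P_{2}|^{2}/|P_{1}|\big)$. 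Moreover $|P_{2}|>0$ (otherwise $G$ is a matching and $r_{k}$ is undefined), so dividing that numerator by $|P_{2}|$ rewrites it as $|P_{2}|\big(|P_{3/2}|+C-|P_{2/1}|\big)$; the sign of this factor gives cases~1 and~3 at once. Finally $r_{k}=0$ forces both the numerator to vanish, i.e. $|P_{3/2}|+C=|P_{2/1}|$, and the denominator $|P_{2}|+3|S_{1,3}|-|P_{2}|^{2}/|P_{1}|=3|S_{1,3}|-|P_{2}|\big(|P_{2/1}|-1\big)$ to be nonzero, which is case~2.
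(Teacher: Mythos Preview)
Your argument is correct. The walk-counting evaluation of $\sum_{(i,j)\in E}k_ik_j$ is carried out cleanly, the case split over coincidences among $v_0,\dots,v_3$ is exhaustive and disjoint, and the algebraic simplification to \eqref{eq:r_k_P} as well as the sign analysis for the trichotomy are all sound. One cosmetic point: when you say ``otherwise $G$ is a matching,'' what you really use is that $|P_2|=0$ forces every endpoint of an edge to have degree~$1$, so the edge-end degree distribution is constant and the Pearson coefficient is undefined; isolated vertices may or may not be present, but they do not enter the sums.

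As for comparison with the paper: the paper does not actually prove this lemma. It is quoted in the Preliminaries as a known result from \cite{Estrada11} (``It was previously proved the following result''), and only the statement is reproduced. Your write-up therefore supplies a self-contained proof where the paper gives none; the approach you take (reduce the edge sums to vertex power sums, interpret $\sum_{(i,j)}k_ik_j$ via closed/open $3$-walks, then identify the denominator as a variance to fix the sign) is the standard route and matches what one finds in \cite{Estrada11}.
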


It is worth mentioning that the denominator of Eq.~\ref{eq:r_k_P}
is non-negative. Consequently, the sign of $r_{k}$ depends only upon
the sign of the numerator, which is determined by the following structural
factors: the global clustering coefficient (i.e. $C=3|C_{3}|/|P_{2}|$),
the intermodular connectivity (i.e. $|P_{3/2}|=|P_{3}|/|P_{2}|$)
and the branching (i.e. $|P_{2/1}|=|P_{2}|/|P_{1}|$) \cite{Estrada11}.

The number of subgraphs contributing to the degree assortativity can
be obtained using the following results \cite{alon1997finding}. 

\begin{lem}
Let $G=\left(V,E\right)$ be a simple graph. Let $k_{i}$ be the degree
of the vertex $i$. Let $|C_{3}|$ be the number of triangles in $G$.
Then, the number of edges $|P_{1}|$, path of length two $|P_{2}|$
and three $|P_{3}|$ are given, respectively by

\[
|P_{1}|=\frac{1}{2}\sum_{i=1}^{n}k_{i},
\]

\[
|P_{2}|=\frac{1}{2}\sum_{i=1}^{n}k_{i}\left(k_{i}-1\right),
\]

\[
|P_{3}|=\sum_{\left(i,j\right)\in E}\left(k_{i}-1\right)\left(k_{j}-1\right)-3|C_{3}|.
\]
\end{lem}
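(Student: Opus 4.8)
The plan is to prove all three identities by elementary double counting, in increasing order of subtlety. The formula for $|P_1|$ is just the handshake lemma: $\sum_{i=1}^n k_i$ counts each edge twice, once at each endpoint, so it equals $2|P_1|$. For $|P_2|$, I would classify each path of length two by its middle vertex, which is the unique vertex of the path incident to both of its edges. A path $P_2$ with center $i$ is precisely a choice of two distinct neighbours of $i$, and there are $\binom{k_i}{2}=\frac{1}{2}k_i(k_i-1)$ such choices; summing over all $i\in V$ and noting that the center is unique (so nothing is over- or under-counted) yields $|P_2|=\frac{1}{2}\sum_i k_i(k_i-1)$.

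For $|P_3|$ I would count non-backtracking walks of length three, namely tuples $(v_0,v_1,v_2,v_3)$ with $v_{t-1}$ adjacent to $v_t$ for $t=1,2,3$, $v_0\ne v_2$, and $v_1\ne v_3$. Counting them by first choosing the ordered middle edge $(v_1,v_2)$ (there are $2m$ choices) and then $v_0\in N(v_1)\setminus\{v_2\}$ and $v_3\in N(v_2)\setminus\{v_1\}$ gives a total of $\sum_{(v_1,v_2)}(k_{v_1}-1)(k_{v_2}-1)=2\sum_{(i,j)\in E}(k_i-1)(k_j-1)$, the factor $2$ coming from each unordered edge being traversed in two directions. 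On the other hand, the same walks split into exactly two types: those with $v_0,v_1,v_2,v_3$ all distinct, which trace a copy of $P_3$ and arise $2$ per such subgraph (one for each direction of travel); and those with $v_0=v_3$ — the only vertex coincidence the two inequalities permit — in which case $v_0,v_1,v_2$ are distinct and pairwise adjacent, so the walk traces a triangle, each triangle arising from $6$ such walks ($3$ starting vertices times $2$ directions). Equating the two counts gives $2\sum_{(i,j)\in E}(k_i-1)(k_j-1)=2|P_3|+6|C_3|$, which rearranges to the asserted identity.

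I expect the only real obstacle to be the case analysis in this last step: one must check carefully that the constraints $v_0\ne v_2$ and $v_1\ne v_3$ leave no degenerate walks other than the triangle traversals — for instance that in a simple graph a walk with $v_0=v_3$ cannot collapse further — and that the genuine $P_3$ subgraphs are counted with the correct multiplicity even when extra chords are present among the four vertices. Once that is settled, tracking the multiplicities ($2$ from orientation, $3$ from the edges/vertices of a triangle, $2m$ ordered edges versus $m$ unordered ones) is routine bookkeeping, and the $|P_1|$ and $|P_2|$ parts are immediate.
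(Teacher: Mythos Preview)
Your proof is correct. The handshake lemma and the center-vertex count for $|P_2|$ are standard; your non-backtracking walk argument for $|P_3|$ is clean, and your case analysis is complete: in a simple graph the constraints $v_0\ne v_1$, $v_1\ne v_2$, $v_2\ne v_3$ (from simplicity), together with $v_0\ne v_2$ and $v_1\ne v_3$, leave $v_0=v_3$ as the only possible coincidence, and the multiplicities $2$ and $6$ are right. Your remark about extra chords is also on point, since $|P_3|$ here counts copies of the path as subgraphs, not induced subgraphs.

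As for comparison: the paper does not give its own proof of this lemma. It is stated as a known result with a reference to Alon--Yuster--Zwick, so there is nothing in the paper to weigh your argument against; you have simply supplied the elementary justification that the paper omits.
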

\medskip{}

\begin{lem}
Let $G=\left(V,E\right)$ be a simple graph. Let $k_{i}$ be the degree
of the vertex $i$ in $G$. Let $A$ be the adjacency matrix of $G$. Let $|P_{1}|$ and $|P_{2}|$ be respectively the number of edges and the number of paths of length two in $G$. Let $|S_{T1D}|$ be the number of subgraphs $S_{T1D}$ in $G$ (see Table \ref{tab:subgraphs}). Let $|C_{i}|$ be the number of cycles of $i$ nodes in $G$.
Then, $|C_{3}|$, $|C_{4}|$ and $|C_{5}|$ are given, respectively by 
\end{lem}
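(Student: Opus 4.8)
The proof rests on the elementary fact that for every integer $\ell\ge 1$ the entry $(A^{\ell})_{ii}$ counts the closed walks of length $\ell$ based at vertex $i$, so that $\mathrm{tr}(A^{\ell})=\sum_i (A^{\ell})_{ii}$ counts all closed $\ell$-walks in $G$. The plan is, for each $\ell\in\{3,4,5\}$, to partition the closed $\ell$-walks according to the subgraph of $G$ spanned by the edges they traverse (their edge-support), to count how many closed $\ell$-walks sit over each admissible support, and then to solve the resulting linear identity for the desired cycle count. The counts of $|P_1|$, $|P_2|$ from the previous lemma, together with the bookkeeping identities $\sum_i k_i=2|P_1|$ and $\sum_i k_i^{2}=2|P_2|+2|P_1|$, will convert trace expressions into the stated combinatorial ones.

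For $\ell=3$ every closed $3$-walk $i\to j\to k\to i$ uses three distinct mutually adjacent vertices, i.e.\ a triangle, and each triangle supports exactly $3\cdot 2=6$ such walks ($3$ base vertices, $2$ orientations); hence $\mathrm{tr}(A^{3})=6|C_3|$. For $\ell=4$ a closed $4$-walk is either a genuine $4$-cycle, contributing $8$ walks per cycle ($4$ base vertices, $2$ orientations), or a degenerate walk $i\to a\to i\to b\to i$ with $a,b\in N(i)$, of which there are $k_i^{2}$ based at $i$; summing and using $\sum_i k_i^{2}=2|P_2|+2|P_1|$ gives $\mathrm{tr}(A^{4})=2|P_1|+4|P_2|+8|C_4|$, which rearranges to the stated formula. (Equivalently one writes $\mathrm{tr}(A^4)=\sum_{i,j}(A^2)_{ij}^2$ and uses $(A^2)_{ii}=k_i$, $\sum_{i\ne j}(A^2)_{ij}=2|P_2|$, and $\sum_{\{i,j\}}\binom{(A^2)_{ij}}{2}=2|C_4|$ since each $4$-cycle has two pairs of opposite vertices.)

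The case $\ell=5$ is the substantive one. The key structural observation is that a graph with no odd cycle is bipartite and so has no closed walk of odd length; hence the edge-support of any closed $5$-walk must contain an odd cycle, and having at most five edges it contains either one triangle or one $5$-cycle. A parity check completes the classification: a closed walk meets every vertex of its edge-support an even number of times, so the support, viewed as a multigraph with the traversed multiplicities, has all degrees even, and one cannot adjoin one or two further \emph{simple} edges to a triangle without producing an odd degree. Thus the only admissible supports are (a) a $5$-cycle, (b) a triangle, and (c) a triangle with exactly one pendant edge attached to one of its vertices, namely the subgraph $S_{T1D}$. A direct enumeration gives the multiplicities: $10$ closed $5$-walks per $5$-cycle; $30$ per triangle (all closed $5$-walks of $K_3$, since $2^{5}+2(-1)^{5}=30$); and $10$ per $S_{T1D}$ (the out-and-back excursion along the pendant edge can be inserted at any of the $5$ cyclic positions of each of the $2$ oriented triangle traversals based at the attachment vertex). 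Therefore $\mathrm{tr}(A^{5})=10|C_5|+30|C_3|+10|S_{T1D}|$, i.e.\ $|C_5|=\tfrac{1}{10}\bigl(\mathrm{tr}(A^{5})-30|C_3|-10|S_{T1D}|\bigr)$.

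The algebraic re-expressions in terms of $|P_1|$ and $|P_2|$ are routine. The genuinely delicate point is the $\ell=5$ classification, both the claim that the three supports are exhaustive (the bipartiteness/parity argument) and the exactness of the multiplicities $10$, $30$, $10$; I would corroborate these against small cases such as $K_4$, $K_5$, and the paw graph (triangle with one pendant edge), for which $\mathrm{tr}(A^{5})$ and all cycle counts are available in closed form and match the identity.
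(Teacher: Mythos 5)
The paper itself gives no proof of these counting formulas (it simply cites Alon--Yuster--Zwick), and your closed-walk classification is exactly the standard derivation one would supply; in substance it is sound, and the delicate $\ell=5$ step (support must contain an odd cycle, hence is a triangle, a paw, or a $5$-cycle, with multiplicities $30$, $10$, $10$) is correct. Two slips should be repaired. First, your primary enumeration for $\ell=4$ is incomplete as written: besides the $4$-cycles and the walks $i\to a\to i\to b\to i$, there are the out-and-back walks $i\to a\to b\to a\to i$ with $b\neq i$, and it is precisely these that contribute the second $2|P_{2}|$ in $\mathrm{tr}(A^{4})=2|P_{1}|+4|P_{2}|+8|C_{4}|$; your stated identity is right, but it does not follow from the two cases you list. (Your parenthetical computation via $\mathrm{tr}(A^{4})=\sum_{i,j}(A^{2})_{ij}^{2}$ is correct and already covers this, so it can simply replace the faulty case analysis. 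Note also that the resulting formula is the intended reading of the paper's display, i.e.\ $|C_{4}|=\tfrac18\left(\mathrm{tr}(A^{4})-2|P_{1}|-4|P_{2}|\right)$.)

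Second, a labelling issue: the subgraph you identify as the third support, a triangle with one pendant edge (the paw), is the fragment the paper denotes $S_{T1S}$, not $S_{T1D}$; this is visible from the paper's formula $|S_{T1S}|=\tfrac12(\vec{k}-\vec{2})^{T}\mathrm{diag}(A^{3})$, which counts triangle--pendant-edge incidences, and it is $|S_{T1S}|$ that appears in the paper's expression $|C_{5}|=\tfrac1{10}\mathrm{tr}(A^{5})-3|C_{3}|-|S_{T1S}|$. ($S_{T1D}$ is the triangle with a two-edge tail, which cannot support a closed $5$-walk.) The confusion is invited by the lemma's own hypothesis, which names $S_{T1D}$, but your final identity should be stated with $|S_{T1S}|$. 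Relatedly, your parity sentence ``one cannot adjoin one or two further simple edges to a triangle without producing an odd degree'' is loosely worded, since adjoining one edge is admissible provided it is traversed twice --- which is exactly the paw case you then allow; the clean argument is that five traversals with even multiplicity-degree at every vertex force multiplicities $(2;1,1,1)$ on a triangle-plus-one-edge support and exclude a triangle with two extra edges, each necessarily of multiplicity one.
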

\[
|C_{3}|=\frac{1}{6}tr\left(A^{3}\right),
\]

\[
|C_{4}|=\frac{1}{8}tr\left(A^{4}\right)-2|P_{1}|-4|P_{2}|,
\]

\[
|C_{5}|=\frac{1}{10}tr\left(A^{5}\right)-3|C_{3}|-|S_{T1S}|.
\]

\begin{lem}
Let $G=\left(V,E\right)$ be a simple graph. Let $k_{i}$ be the degree
of the vertex $i$. Let $A$ be the adjacency matrix of $G$. Let $|S_{1,3}|$, $|S_{T1S}|$, $|S_{T2S}|$, $|S_{T1D}|$, $|S_{C/}|$ and $|S_{C1S}|$ be the number of subgraphs $S_{1,3}$, $S_{T1S}$, $S_{T2S}$, $S_{T1D}$, $S_{C/}$ and  $S_{C1S}$, respectively, in $G$ (see Table \ref{tab:subgraphs}). Then, 
\end{lem}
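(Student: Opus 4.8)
The plan is to handle all six subgraph types by the same three-stage recipe that underlies the preceding lemmas and the counting method of \cite{alon1997finding}. \emph{Stage~1 (raw count).} For each target subgraph $H$ fix a convenient ``skeleton'': an unordered selection of neighbours of a common vertex, or a short walk/path attached to a triangle or to a four-cycle, chosen so that the number of ways of placing the skeleton in $G$ is immediately expressible through the degree sequence $(k_i)$ and the entries of the small powers $A^{2}$ and $A^{3}$; here $\tfrac12(A^{3})_{ii}$ is the number of triangles through $i$ and $(A^{2})_{ij}$ is the number of common neighbours of $i$ and $j$. \emph{Stage~2 (degeneracy correction).} The raw count over-counts because formally distinct skeleton vertices may be realized by the same vertex of $G$; each such collapse produces a \emph{smaller} subgraph that has already been enumerated ($|P_1|,|P_2|,|P_3|,|C_3|,|C_4|,|C_5|,|S_{1,3}|,\dots$), so these are subtracted with appropriate multiplicities. \emph{Stage~3 (symmetry).} Finally one divides by $|\mathrm{Aut}(H)|$, or, equivalently, one chooses the skeleton features as unordered objects so that each copy of $H$ is produced exactly once.

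I now indicate how the stages specialize. For $S_{1,3}$ (the four-vertex star) a copy is exactly an unordered triple of neighbours of a common vertex and no collapse is possible, so $|S_{1,3}|=\sum_{i}\binom{k_{i}}{3}$ with no correction. For the triangle-augmented subgraphs $S_{T1S}$, $S_{T2S}$ and $S_{T1D}$ one starts from the triangles of $G$ --- vertex $i$ lies on $\tfrac12(A^{3})_{ii}$ of them --- and attaches pendant structure: a pendant edge at a triangle vertex $i$ in $k_i-2$ ways, a pendant path of length two in $\sum_{j\sim i}(k_j-1)$ ways before corrections, and so on. Summing the relevant product over the three vertices of each triangle and over all triangles gives the raw count; one then removes the placements in which an attached vertex falls back onto a triangle vertex or onto another attached vertex, each of which is a copy of $C_3$, $C_4$, $S_{1,3}$ or $S_{T1S}$ already on our list, and divides by the order of the automorphism group.

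For the two four-cycle-augmented subgraphs it is cleaner to key on a distinguished feature. A four-cycle carrying a chord (the ``diamond'' $K_4-e$) contains a unique edge that lies in two of its own triangles, and that edge together with its two opposite endpoints recovers the diamond, so
\[ |S_{C/}|=\sum_{(i,j)\in E}\binom{(A^{2})_{ij}}{2}, \]
the binomial coefficient counting unordered pairs of common neighbours of $i$ and $j$; this stays correct for the subgraph count even when extra edges are present, since each copy of $K_4-e$ has exactly one edge lying in two of its own triangles. A four-cycle with a pendant edge is obtained by choosing one of the $|C_4|$ four-cycles, a vertex $v$ on it, and one of the $k_v-2$ neighbours of $v$ that are not cycle-neighbours of $v$; the only non-faithful placements occur when the chosen neighbour is the vertex of the cycle opposite to $v$, which happens exactly twice per diamond, giving $|S_{C1S}|=\sum_{v}k_{v}\,c_4(v)-8|C_4|-2|S_{C/}|$, where $c_4(v)$, the number of four-cycles through $v$, equals $\sum_{\{a,b\}\subseteq N(v)}\bigl((A^{2})_{ab}-1\bigr)$.

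I expect the genuine difficulty to lie entirely in Stage~2: one must be certain that every placement of a skeleton that is not a faithful copy of $H$ is removed exactly once. In the denser cases ($S_{T1D}$, $S_{C1S}$) this is an inclusion--exclusion over the partitions of the skeleton's vertex set according to which formally distinct vertices are identified in $G$, and fixing the signs and multiplicities is the delicate part. The safe procedure is to enumerate, for each skeleton, all such identifications, observe that each collapses to a graph already counted, and then pin down the resulting constants by checking the identity on a handful of graphs with known counts such as $K_4$, $K_{2,3}$, $C_5$ and the Petersen graph.
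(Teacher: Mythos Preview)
The paper does not actually prove this lemma: it sits in the Preliminaries section, and the formulas are simply quoted with a reference to \cite{alon1997finding} (and, for the later companion formulas, to \cite{Estrada_Knight}, where the authors say the derivations ``are not given here as they are lengthly and technical''). So there is no argument in the paper to compare yours against. Your three-stage recipe---choose a skeleton whose placements are counted by degrees and by entries of $A^2$, $A^3$; subtract the degenerate placements, each of which collapses to an already-counted smaller subgraph; divide by $|\mathrm{Aut}(H)|$---is exactly the standard method those references use, and your specialisations for $S_{1,3}$, $S_{T1S}$, $S_{T2S}$, $S_{T1D}$ are on target.

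One point to watch: your closed form for the diamond,
\[
|S_{C/}|=\sum_{(i,j)\in E}\binom{(A^{2})_{ij}}{2},
\]
is the usual (and correct) one, but it is \emph{not} the expression the paper records, namely $\vec{1}^{\,T}(Q\circ(Q-A))(\vec{k}-\vec{2})$ with $Q=A^{2}\circ A$. On the diamond $K_4-e$ itself your formula returns $1$ while the paper's returns $4$, and on $K_4$ they return $6$ and $24$; the paper's displayed expression therefore appears to be missing a normalising factor (or is mis-set). The same caution applies to the paper's $|S_{C1S}|$ line. Your derivation for $|S_{C1S}|$ via $\sum_v c_4(v)(k_v-2)-2|S_{C/}|$ is sound; if you want to match the paper's matrix form you will need to rewrite $\sum_v c_4(v)(k_v-2)$ using $P=\tfrac12\bigl(A^{2}\circ(A^{2}-1)\bigr)$ and then reconcile constants, rather than assume the printed formulas are exactly right.

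So: your plan is correct and is the same method the cited sources use; just be prepared to verify the final constants against small test graphs, since the paper's quoted expressions contain at least one apparent normalisation slip.
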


\[
|S_{1,3}|=\frac{1}{6}\sum_{i=1}^{n}k_{i}\left(k_{i}-1\right)\left(k_{i}-2\right)
\]

\[
|S_{T1S}|=\frac{1}{2}\left(\vec{k}-\vec{2}\right)^{T}diag\left(A^{3}\right)
\]

\[
|S_{T2S}|=\frac{1}{4}\left(\left(\vec{k}-\vec{2}\right)\circ\left(\vec{k}-\vec{3}\right)\right)^{T}diag\left(A^{3}\right)
\]

\[
|S_{T1D}|=\frac{1}{2}\vec{1}^{\,T}\left(A^{2}-A^{2(D)}\right)diag\left(A^{3}\right)-6|C_{3}|-2|S_{T1S}|-4|S_{C/}|
\]
 
\[
|S_{C/}|=\vec{1}^{\,T}\left(Q\circ\left(Q-A\right)\right)\left(\vec{k}-\vec{2}\right)
\]

\[
|S_{C1S}|=\vec{1}^{\,T}\left(P-P^{(D)}\right)\left(\vec{k}-\vec{2}\right)-2|S_{C/}|,
\]

\noindent where $Q=A^{2}\circ A$, $P=\frac{1}{2}\left(A^{2}\circ(A^{2}-1)\right)$, $\vec{x}$
is an all-x vector and $\circ$ denotes the Hadamard product.


\section{Two-Walks Degree assortativity in Graphs and Networks}
\label{Sec:Neighbourhood}
Let us start by the definition of the degree of a node $i$, $k_{i}$.
The intuition behind this index is very simple. Every nearest neighbour
of the node $i$ receives an identical weight of $1$. Then, we sum
the weights of every node adjacent to $i$ to obtain $k_{i}$. Mathematically,
this corresponds to obtaining the following vector $\vec{k}$ after
assigning the unit weights to every node:

\begin{equation}
\vec{k}=A\vec{1},\label{eq:degree vector}
\end{equation}
where $\vec{1}$ is an all-ones vector. The intuition behind this
index is very simple.

It is customary to consider that not all the neighbours of one particular
node are equally important. This is the basis for instance of Katz
centrality index \cite{katz1953new}, eigenvector centrality \cite{bonacich1972factoring}, 
PageRank \cite{grin1998anatomy}, subgraph centrality \cite{estrada2005subgraph}
and so for. Then, we can consider that every neighbour of the node
$i$ is weighted according to its ``importance''. Of course, the
definition of that importance will define the way in which we will
proceed. In order to consider the current development as an extension
of the concept of node degree we simply weight every node by its own degree.
That is, now we consider the vector $\vec{k}$ as the weighting vector
for the nodes of the graph. Consequently, an extension of the concept
of degree is given by applying a similar procedure as in (\ref{eq:degree vector})
to $\vec{k}$,

\begin{equation}
\vec{k}_{2}=A\vec{k}.
\end{equation}

It is straightforward to realize that $\vec{k}_{2}=A(A\vec{1})=A^2\vec{1}$. Then, obviously, the entries of this new vector represent a new kind of centrality of the nodes which counts the number of two-walks starting at the corresponding node. Consequently, we suggest the name of "two-walks" degree for the entries of $\vec{k}_{2}$. Let us call $\tilde{k}_{i}$ the $i$th
entry of $\vec{k}_{2}$ in a graph. Notice that $\tilde{k}_{i}$ accounts for the degree of the node $i$, i.e., closed walks of length two, as well as for the number of second neighbours of this node. Then,

\begin{equation}
\tilde{k}_{i}=\sum_{j\in N\left(i\right)}k_{j},
\end{equation}
where $N\left(i\right)$ is the neighbourhood of the node $i$, i.e.,
$N\left(i\right)=\left\{ j\left|\left(j,i\right)\in E\right.\right\} $.
That is, the two-walks degree $\tilde{k}_{i}$ represents the
number of weighted neighbours that the node $i$ has, where the weight
of the nodes is given by its own degree.

Let us now define a quantity analogous to the degree assortativity
index based on the two-walks degrees instead of on the node
degrees. 
\begin{defn}
Let $G$ be a connected simple graph with adjacency matrix $A$ and
let $\tilde{k}_{i}$ be the two-walks degree of the vertex $i$.
The two-walks degree assortativity index of a graph is defined as

\begin{equation}
r_{\tilde{k}}=\dfrac{\frac{1}{m}\sum_{\left(i,j\right)\in E}\tilde{k}_{i}\tilde{k}_{j}-\left\{ \frac{1}{m}\sum_{\left(i,j\right)\in E}\frac{1}{2}\left[\tilde{k}_{i}+\tilde{k}_{j}\right]\right\} ^{2}}{\frac{1}{m}\sum_{\left(i,j\right)\in E}\frac{1}{2}\left[\tilde{k}_{i}^{2}+\tilde{k}_{j}^{2}\right]-\left\{ \frac{1}{m}\sum_{\left(i,j\right)\in E}\frac{1}{2}\left[\tilde{k}_{i}+\tilde{k}_{j}\right]\right\} ^{2}}.
\end{equation}
\end{defn}
Obviously, this quantity tell us how well connected the most important
nodes in a graph are. That is, if $r_{\tilde{k}}>0$, the graph is
two-walks degree assortative, which means that the most weighted nodes
in terms of the degree of their neighbours tend to be connected to
each other. On the other hand, if if $r_{\tilde{k}}<0$, the graph
is two-walks degree disassortative, which means that the most weighted
nodes in terms of the degree of their neighbours tend to be connected
to those with least weight. If $r_{\tilde{k}}=0$, neither of these
two tendencies is observed and we shall call such graphs neutral.

In Fig. \ref{Graphic example} we represent a graph which is strongly
disassortative for the degree ($r_{k}\approx-0.822$) but it is assortative
for the two-walks degree index ($r_{\tilde{k}}\approx0.212$). We plot
the graph with the nodes weighted by the difference between the degree
(resp. two-walks degree) minus the average degree (resp. average
two-walks degree). The negative values are colored in red and the
positive contributions in blue. The size of the nodes is proportional
to the absolute value of this difference. As can be seen in this picture
the degree-degree interaction between the nodes (left panel) is dominated
by red-blue interactions, which indicates a large number of interactions
between high degree nodes (blue ones) with low degree ones (red nodes).
This of course results in a negative degree assortativity coefficient.
On the other hand, for the two-walks degree plot the graph is
dominated by blue-blue and red-red interactions. That is, nodes of
high two-walks degree interact with each other, and low two-walks degree
nodes also interact preferentially among them. This effects result
in a two-walks degree assortativity coefficient.

\begin{center}
\begin{figure}
\begin{centering}
\includegraphics[width=0.45\textwidth]{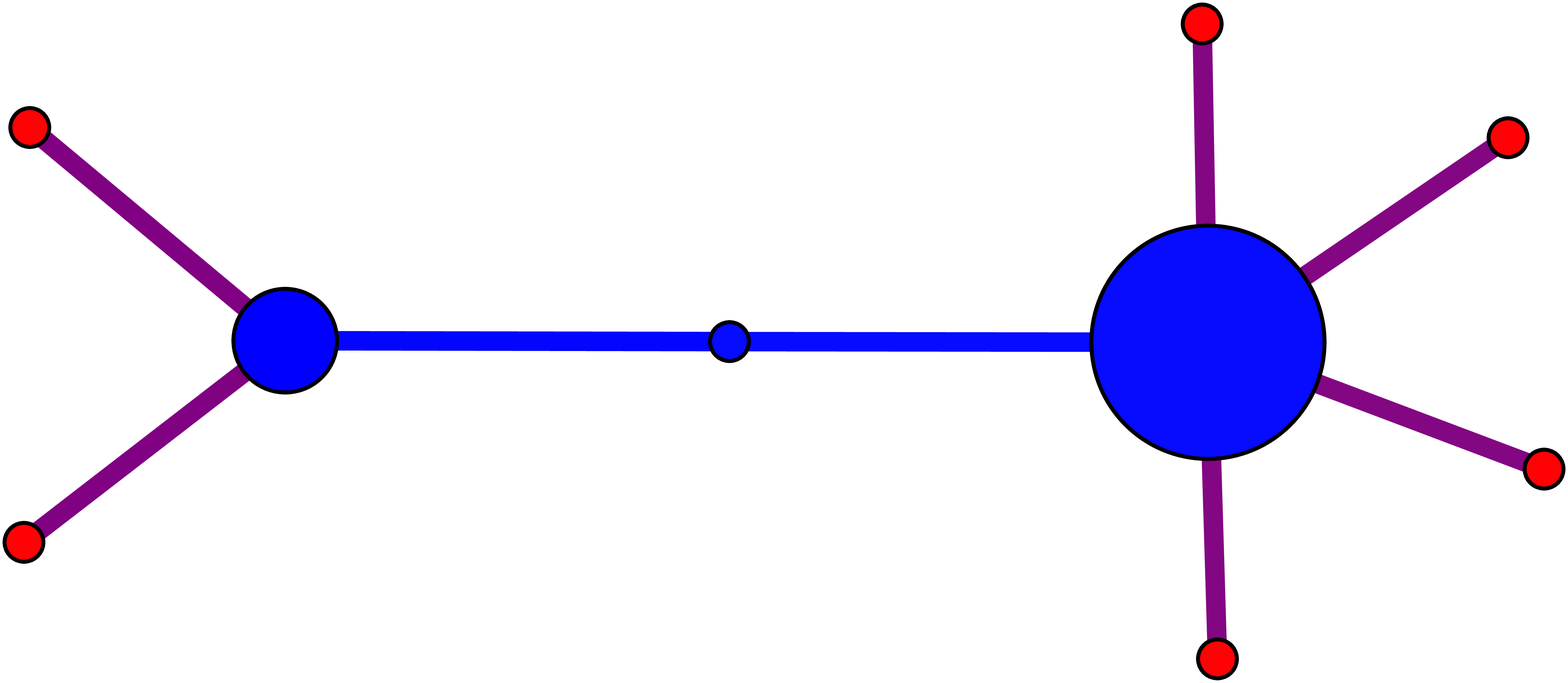}\qquad{}\includegraphics[width=0.45\textwidth]{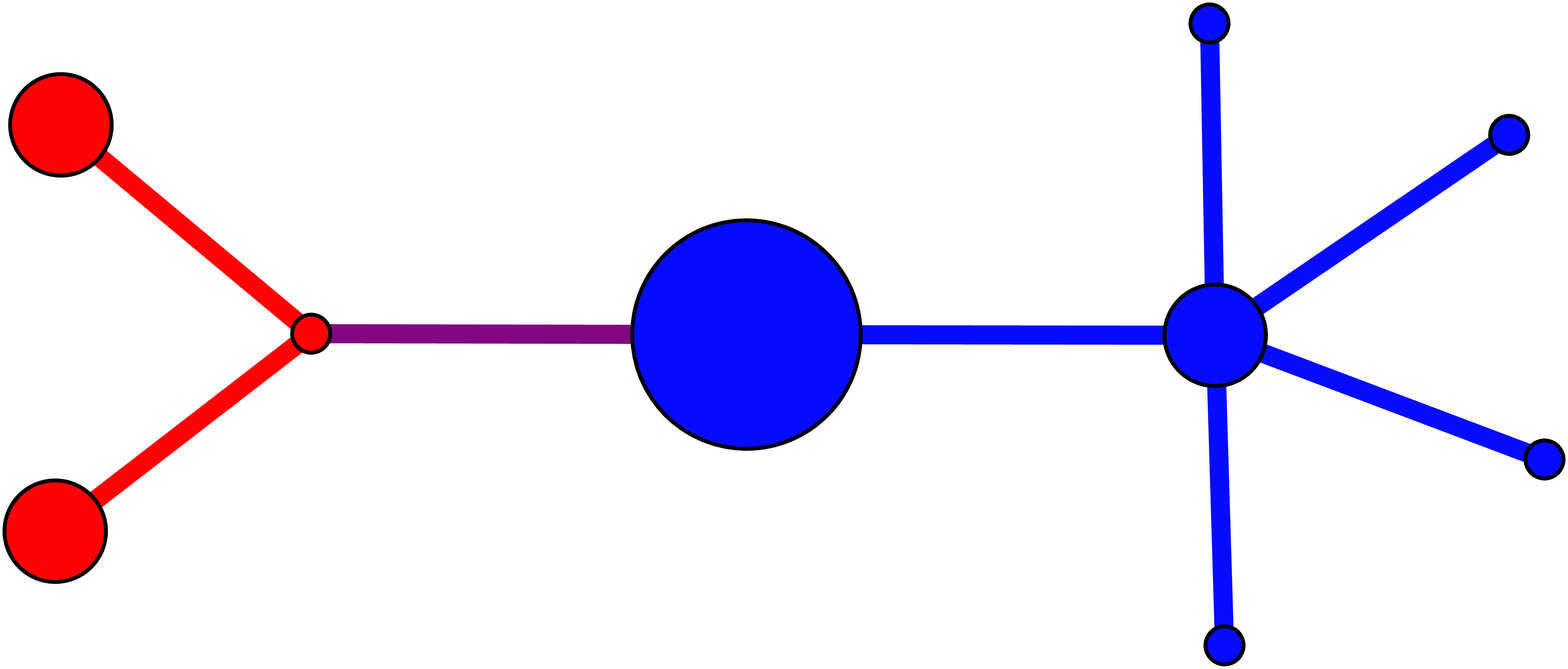} 
\par\end{centering}
\caption{Example of the structural effect that may produce a change from degree disassortative to two-walks degree assortative in a simple graph. Here the nodes are drawn in red if their degree (resp.two-walks degree) is smaller than the average degree, or blue otherwise. The size of the node is proportional to the magnitude of this difference.}
\label{Graphic example} 
\end{figure}
\par\end{center}

With the new correlation coefficient introduced here we assess the
tendency of neighbourhoods with many interactions to be connected
to other ''high-connected'' neighbourhoods.
However, in order for a graph to display a transition from degree diassortative
to two-walks degree assortative it is necessary that there are \textit{\textcolor{black}{separator
nodes}} between the high-degree nodes. The graph in Fig. \ref{Graphic example}
has a separator, which is the node of degree 2 connecting both nodes
of degree 3 and 5. A separator must be a low-degree node which connects
two or more high-degree ones. Notice that if the number of high-degree
nodes connected to the separator is too high, it will produce an increase
in its own degree, which decreases its chances of being a proper separator.
This characteristic\textendash a separator connected to two high-degree
nodes\textendash introduces disassortativity to the graph. However,
in term of the second-order correlation a separator allow the two-steps
interactions between hubs, which results in two-walks degree assortativity. 
Mathematically, it is not difficult to see that the two-walks degree is related to walks of length two between node.



It is easy to realize that the two-walks degree assortativity can be
written in matrix-vector form in the following way:

\begin{equation}
r_{\tilde{k}}=\frac{\vec{1}^{T}A^{5}\vec{1}-\left(\vec{1}^{T}A\vec{1}\right)^{-1}\left(\vec{1}^{T}A^{3}\vec{1}\right)^{2}}{\left(\vec{k}_{2}\circ\vec{k}_{2}\right)^{T}A\vec{1}-\left(\vec{1}^{T}A\vec{1}\right)^{-1}\left(\vec{1}^{T}A^{3}\vec{1}\right)^{2}}\label{eq:r_R_matrix}
\end{equation}


\begin{table}[htp]
\begin{tabular}{ccc}
\begin{minipage}[c]{4cm}%
\center{\includegraphics[width=2.5cm]{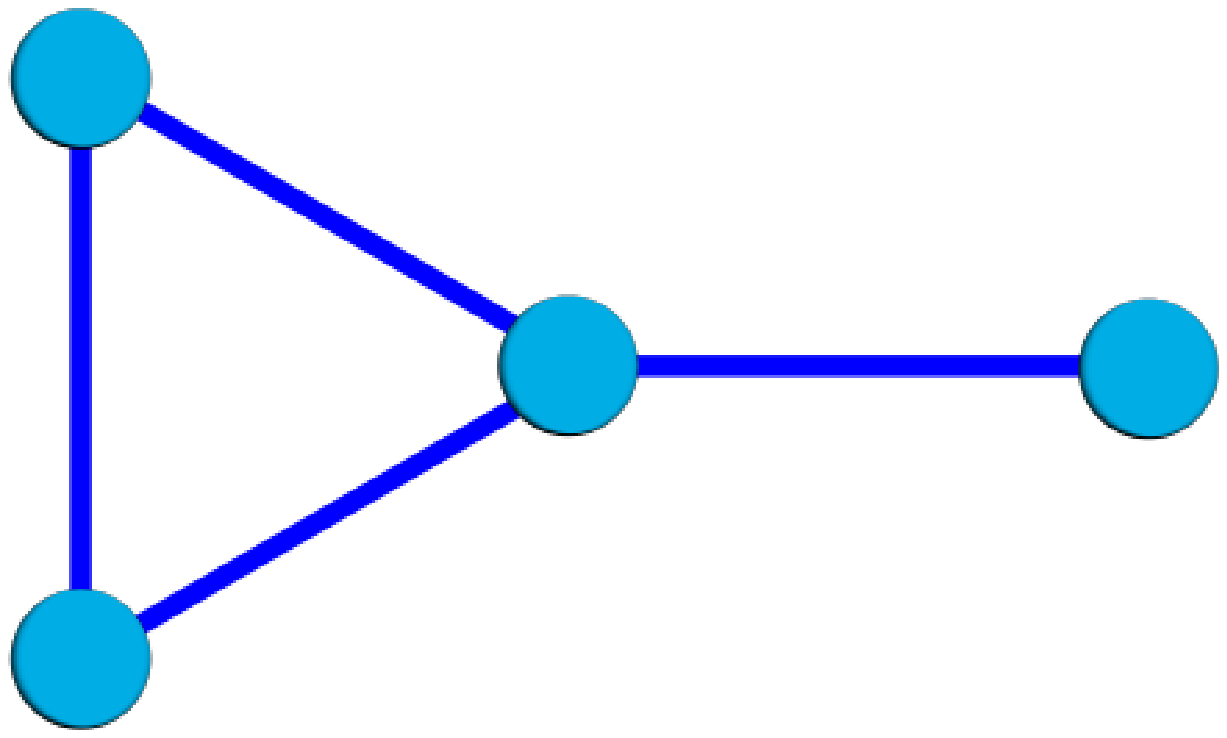}} %
\end{minipage} & %
\begin{minipage}[c]{4cm}%
\center{\includegraphics[width=3cm]{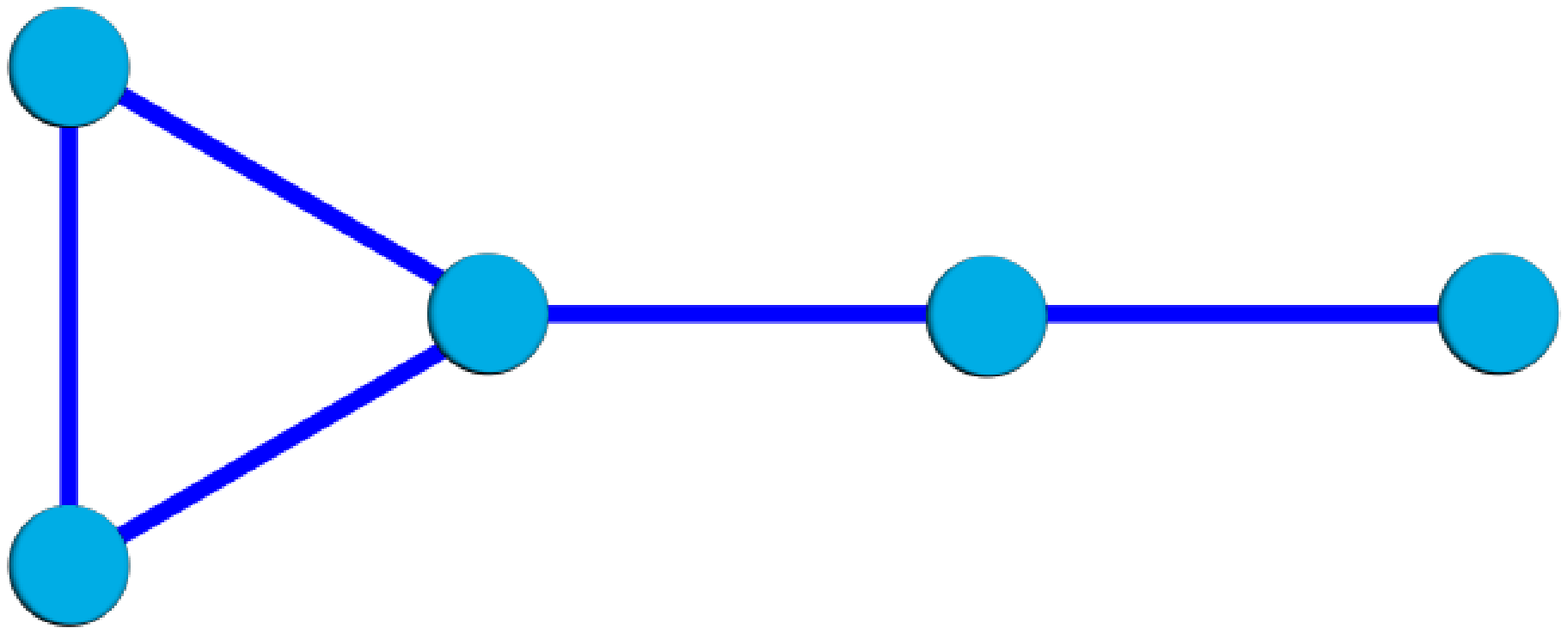}} %
\end{minipage} & %
\begin{minipage}[c]{4cm}%
\center{\includegraphics[width=2cm]{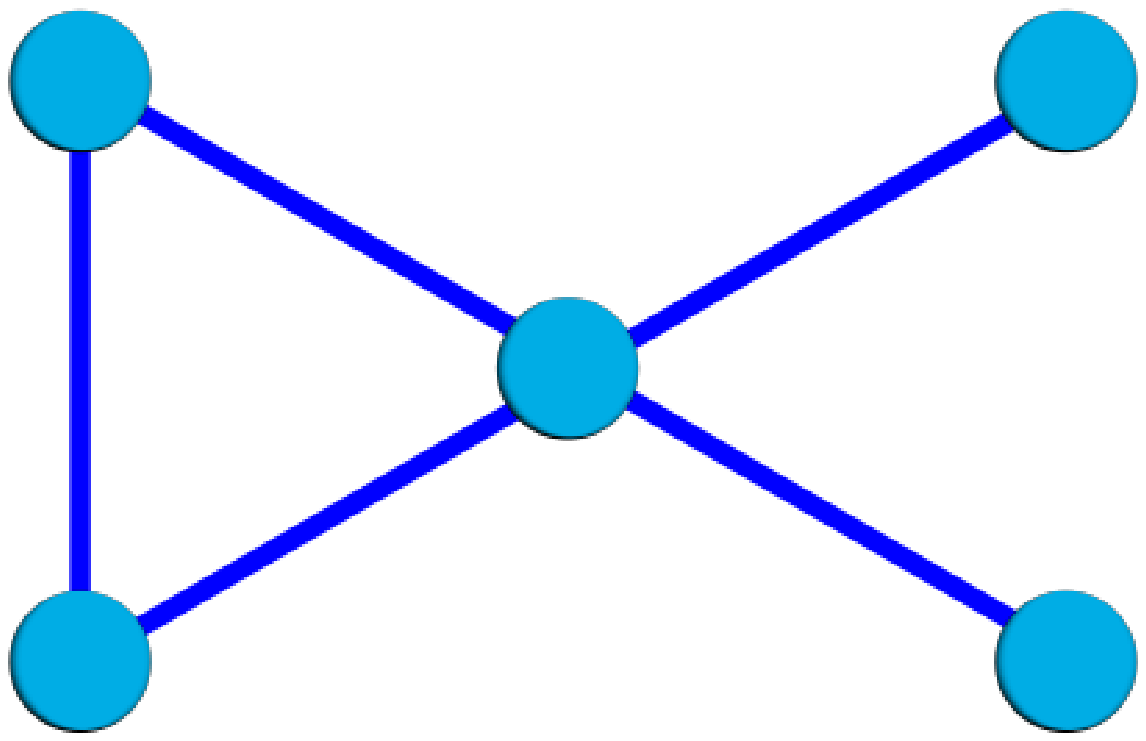}} %
\end{minipage}\tabularnewline
$S_{T1S}$  & $S_{T1D}$  & $S_{T2S}$\tabularnewline
 &  & \tabularnewline
\begin{minipage}[c]{4cm}%
\center{\includegraphics[width=1.7cm]{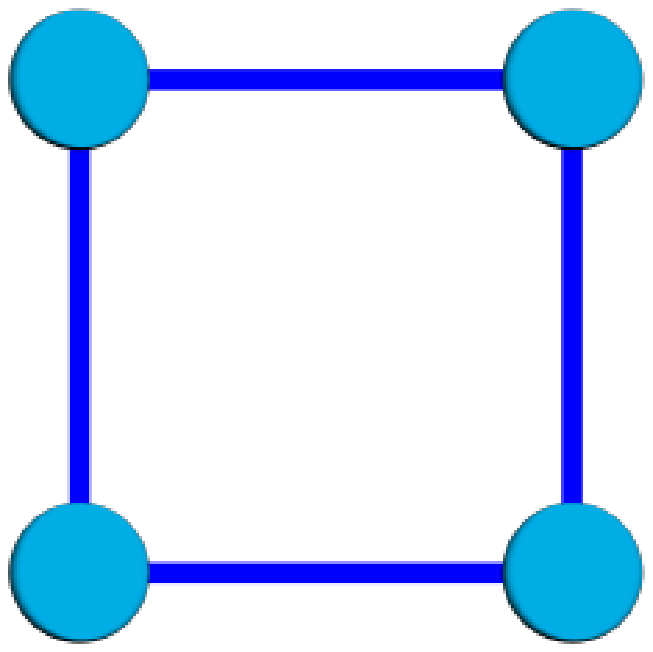}} %
\end{minipage} & %
\begin{minipage}[c]{4cm}%
\center{\includegraphics[width=1.7cm]{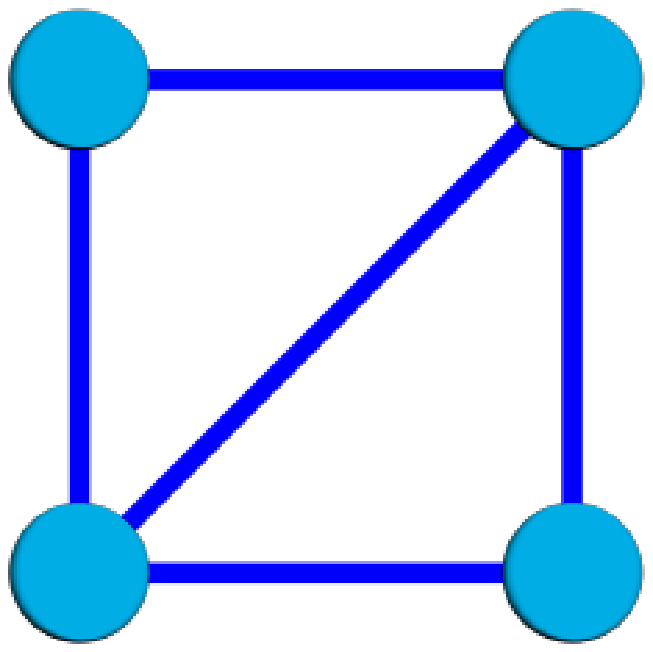}} %
\end{minipage} & %
\begin{minipage}[c]{4cm}%
\center{\includegraphics[width=2.5cm]{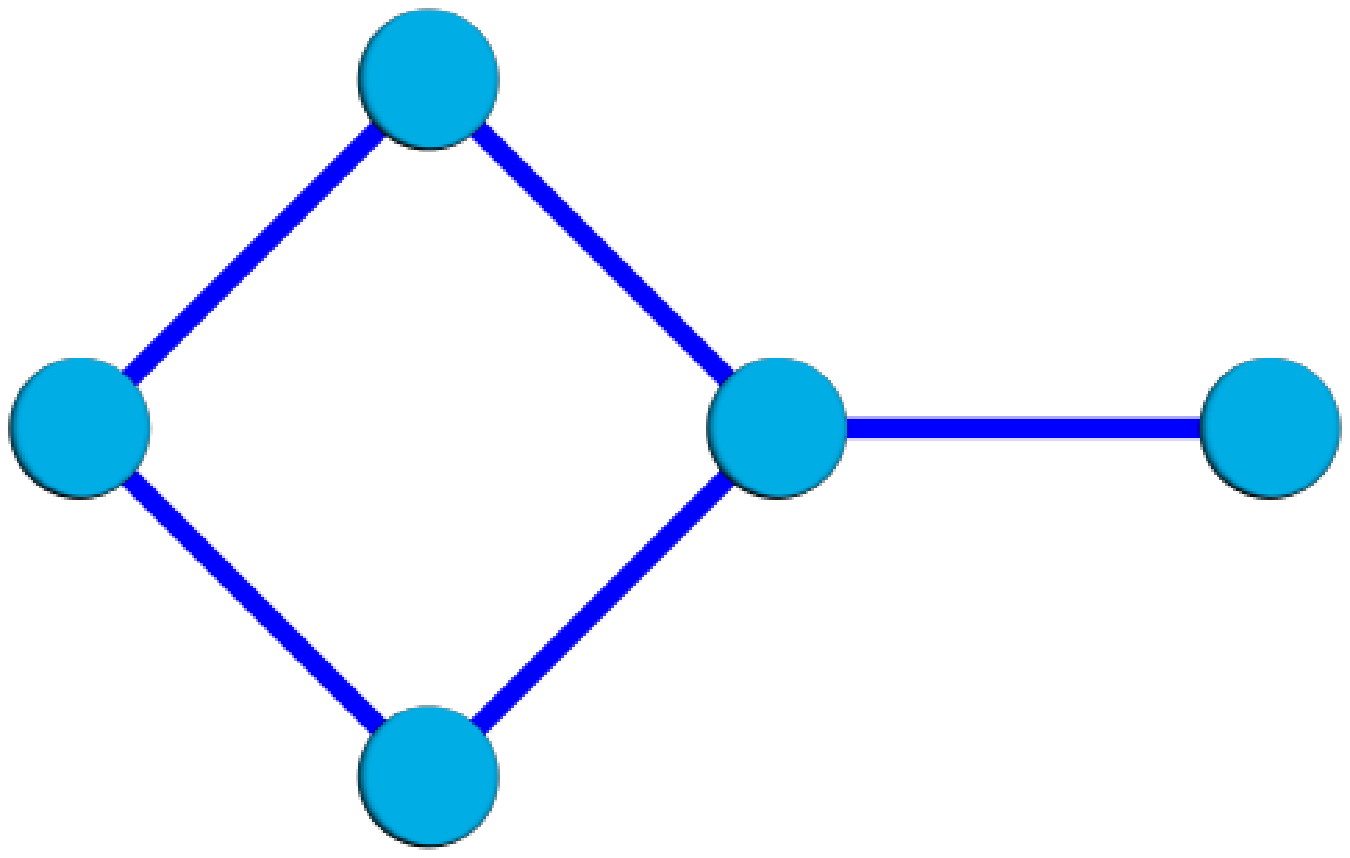}} %
\end{minipage}\tabularnewline
$C_{4}$  & $S_{C/}$  & $S_{C1S}$\tabularnewline
 &  & \tabularnewline
\begin{minipage}[c]{4cm}%
\center{\includegraphics[width=2cm]{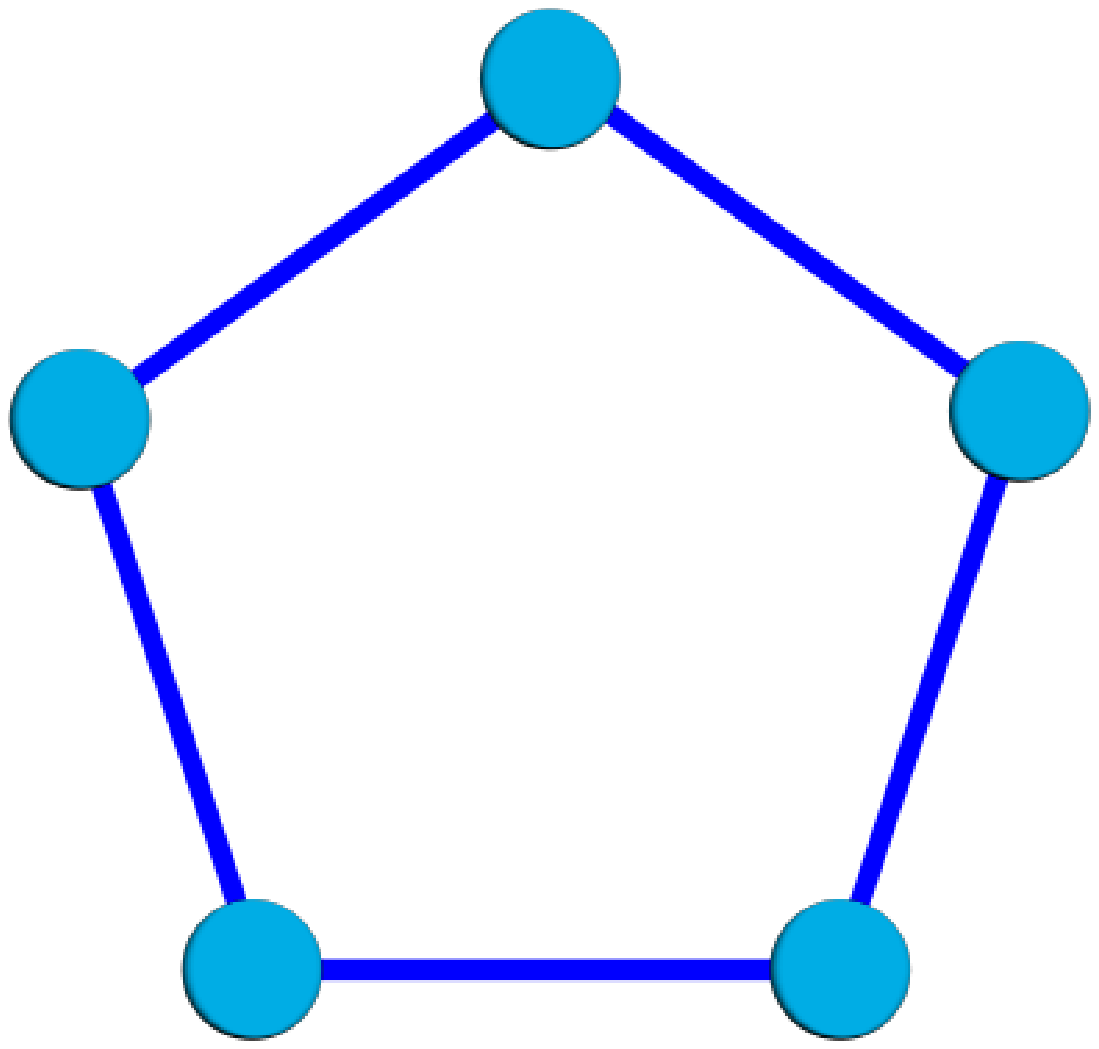}} %
\end{minipage} & %
\begin{minipage}[c]{4cm}%
\center{\includegraphics[width=2cm]{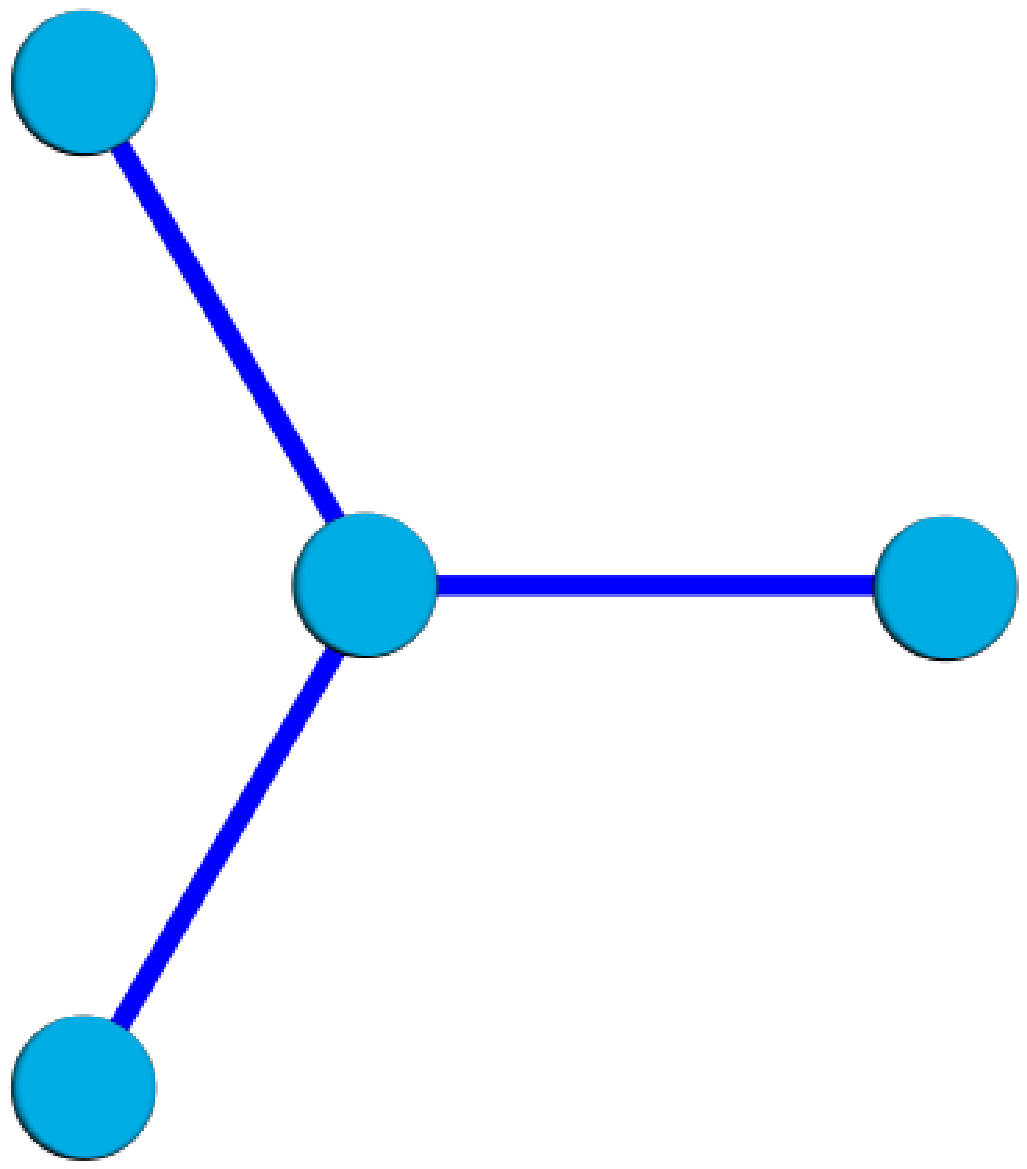}} %
\end{minipage} & %
\begin{minipage}[c]{4cm}%
\center{\includegraphics[width=2.5cm]{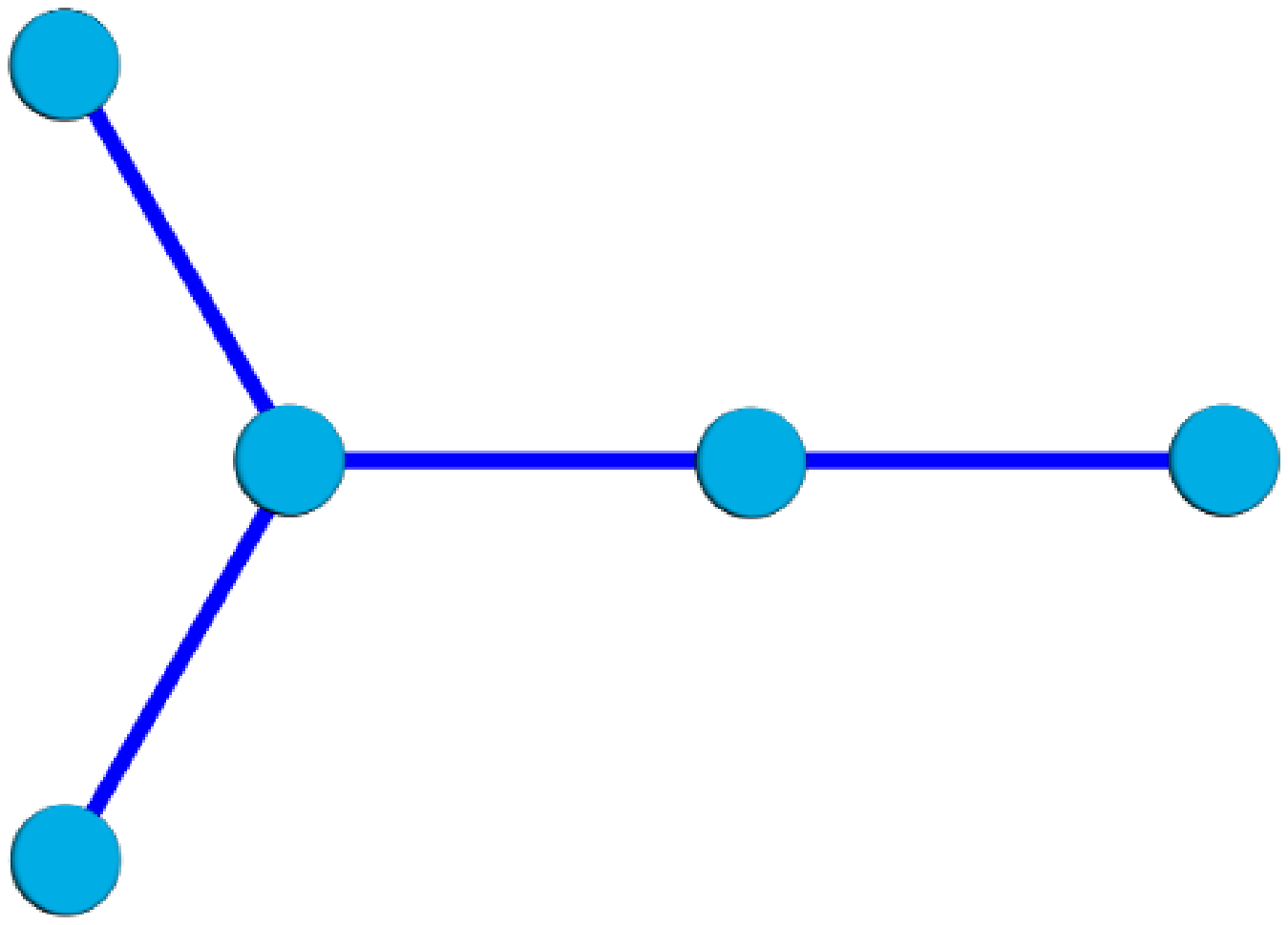}} %
\end{minipage}\tabularnewline
$C_{5}$  & $S_{1,3}$  & $S_{Y}$\tabularnewline
 &  & \tabularnewline
\end{tabular}\caption{Collection of subgraphs in Eq.~\ref{eq:R}, excluding the paths $P_{2}$, $P_{3}$, $P_{4}$, $P_{5}$, and the cycle $C_{3}$.}
\label{tab:subgraphs}
\end{table}


\section{Main Result}

\label{Sec:Main}

Our main result here consists on the determination of the two-walks degree
assortativity of a graph in terms of contributing subgraphs of the
graph. This allows us to understand this quantity in structural terms
for the analysis of real world systems in further sections of this
work. 
\begin{thm}
Let $G$ be a simple graph. Then, $G$ is, in terms of two-walks degree,

i) assortative if $R>2|P_{2}||P_{2/1}|\left(|P_{3/2}|+C+2\right)^{2}$,

ii) neutral if $R=2|P_{2}||P_{2/1}|\left(|P_{3/2}|+C+2\right)^{2}$,

iii) disassortative if $R<2|P_{2}||P_{2/1}|\left(|P_{3/2}|+C+2\right)^{2}$,

where

\begin{eqnarray}
\label{eq:R}
R  =4|P_{2}|+8|P_{3}|+4|P_{4}|+2|P_{5}|+42|C_{3}|+24|C_{4}|+10|C_{5}|\\
+12|S_{1,3}|+4|S_{Y}|+22|S_{T1S}|+4|S_{T2S}|+4|S_{T1D}|+12|S_{C/}|+4\left|S_{C1S}\right| \nonumber,
\end{eqnarray}
and $|P_{r/s}|=|P_{r}|/|P_{s}|$ and $C=3|C_{3}|/|P_{2}|$. 
\end{thm}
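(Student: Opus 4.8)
The plan is to start from the matrix--vector form of $r_{\tilde{k}}$ in Eq.~\ref{eq:r_R_matrix} and reduce the statement to an expansion of the scalars $\vec{1}^{T}A^{3}\vec{1}$ and $\vec{1}^{T}A^{5}\vec{1}$ into subgraph counts. First I would observe that the denominator in Eq.~\ref{eq:r_R_matrix} is non-negative: since $(\vec{k}_{2}\circ\vec{k}_{2})^{T}A\vec{1}=\sum_{v}k_{v}\tilde{k}_{v}^{2}$, $\vec{1}^{T}A^{3}\vec{1}=\sum_{v}k_{v}\tilde{k}_{v}$ and $\vec{1}^{T}A\vec{1}=\sum_{v}k_{v}=2m$, the denominator equals $2m$ times the variance of $\tilde{k}$ under the degree-biased vertex distribution, hence is $\ge 0$; it vanishes exactly when $\tilde{k}$ is constant on $V$, in which case the numerator vanishes too and $G$ is neutral. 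Thus the sign of $r_{\tilde{k}}$ is the sign of $N:=\vec{1}^{T}A^{5}\vec{1}-\left(\vec{1}^{T}A\vec{1}\right)^{-1}\left(\vec{1}^{T}A^{3}\vec{1}\right)^{2}$, and it suffices to prove the identity $N=R-2|P_{2}||P_{2/1}|\left(|P_{3/2}|+C+2\right)^{2}$.

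For the subtracted term I would use $\vec{1}^{T}A^{3}\vec{1}=\vec{k}^{T}A\vec{k}=2\sum_{(i,j)\in E}k_{i}k_{j}$. Writing $k_{i}k_{j}=(k_{i}-1)(k_{j}-1)+(k_{i}+k_{j})-1$, summing over $E$, and using the preliminary formulas for $|P_{2}|$ and $|P_{3}|$ together with $\sum_{(i,j)\in E}(k_{i}+k_{j})=\sum_{v}k_{v}^{2}=2|P_{2}|+2|P_{1}|$, one gets $\sum_{(i,j)\in E}k_{i}k_{j}=|P_{3}|+3|C_{3}|+2|P_{2}|+|P_{1}|$. Put $X:=|P_{3}|+3|C_{3}|+2|P_{2}|$, so that $X/|P_{2}|=|P_{3/2}|+C+2$ and $\vec{1}^{T}A^{3}\vec{1}=2(X+|P_{1}|)$; then, since $\vec{1}^{T}A\vec{1}=2|P_{1}|$ and $|P_{2}|^{2}/|P_{1}|=|P_{2}||P_{2/1}|$,
\begin{align*}
\left(\vec{1}^{T}A\vec{1}\right)^{-1}\left(\vec{1}^{T}A^{3}\vec{1}\right)^{2}
&=\frac{2(X+|P_{1}|)^{2}}{|P_{1}|}=\frac{2X^{2}}{|P_{1}|}+4X+2|P_{1}|\\
&=2|P_{2}||P_{2/1}|\left(|P_{3/2}|+C+2\right)^{2}+4|P_{3}|+12|C_{3}|+8|P_{2}|+2|P_{1}|.
\end{align*}

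The core of the argument is the expansion of $\vec{1}^{T}A^{5}\vec{1}$, the number of walks of length five in $G$. I would classify each such walk by the subgraph $H\subseteq G$ spanned by the edges it traverses: a walk of length five can cover $H$ only if $H$ is connected, has at most five edges, and carries a walk of length exactly five using every edge (for $|E(H)|=5$ this means an Eulerian trail or circuit). Enumerating these $H$ up to isomorphism yields exactly the paths $P_{1},\dots,P_{5}$, the cycles $C_{3},C_{4},C_{5}$, and the graphs $S_{1,3},S_{Y},S_{T1S},S_{T2S},S_{T1D},S_{C/},S_{C1S}$ of Table~\ref{tab:subgraphs}. For each $H$ one computes $w_{H}$, the number of length-five walks of $H$ using every edge, either directly (e.g. $w_{P_{1}}=w_{P_{5}}=2$, $w_{C_{5}}=10$, $w_{S_{1,3}}=12$) or by inclusion--exclusion over the set of avoided edges (e.g. $w_{C_{4}}=24$, $w_{C_{3}}=54$, $w_{S_{T1S}}=22$), so that $\vec{1}^{T}A^{5}\vec{1}=\sum_{H}w_{H}\,|H|$. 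Since the $|H|$ are precisely the subgraph counts occurring in Eq.~\ref{eq:R} (and can themselves be read off from $A$ via the preliminary lemmas for $|C_{3}|,|C_{4}|,|C_{5}|,|S_{1,3}|,|S_{T1S}|,|S_{T2S}|,|S_{T1D}|,|S_{C/}|,|S_{C1S}|$), collecting the coefficient of each $|H|$ gives
\[
\vec{1}^{T}A^{5}\vec{1}=R+4|P_{3}|+12|C_{3}|+8|P_{2}|+2|P_{1}|,
\]
with $R$ exactly the quantity in Eq.~\ref{eq:R}.

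Subtracting the two displays, the terms $4|P_{3}|+12|C_{3}|+8|P_{2}|+2|P_{1}|$ cancel and $N=R-2|P_{2}||P_{2/1}|\left(|P_{3/2}|+C+2\right)^{2}$; the three cases (i)--(iii) then follow according to whether $N$ is positive, zero or negative. The step I expect to be the genuine obstacle is the walk classification: one must verify that the list of admissible spanning subgraphs is complete and irredundant, compute every multiplicity $w_{H}$ without error, and check that these multiplicities match the normalization of the subgraph-counting formulas recalled in the preliminaries. Everything else is elementary algebra and bookkeeping.
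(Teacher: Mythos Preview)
Your proposal is correct and follows essentially the same route as the paper: establish non-negativity of the denominator in Eq.~\eqref{eq:r_R_matrix}, then expand $\vec{1}^{T}A^{3}\vec{1}$ and $\vec{1}^{T}A^{5}\vec{1}$ in subgraph counts to identify the numerator with $R-2|P_{2}||P_{2/1}|(|P_{3/2}|+C+2)^{2}$. Your variance argument for the denominator (reading it as $2m\,\mathrm{Var}_{p}(\tilde{k})$ under the degree-biased distribution $p_{v}=k_{v}/2m$) is in fact cleaner than the paper's Cauchy--Schwarz proof of Lemma~\ref{lem:positive_denom}, and your explicit walk-by-spanning-subgraph classification supplies the mechanism that the paper only asserts and defers to~\cite{Estrada_Knight}.
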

First, we prove that the denominator of the expression (\ref{eq:r_R_matrix})
is always non-negative. 

\begin{lem}
\label{lem:positive_denom} Let $G$ be a connected simple graph with
adjacency matrix $A$. Let $\vec{k}$ and $\vec{k}_{2}$ be vectors
of the nodes degrees and a vector of nodes two-walks degrees, respectively.
Then,

\begin{eqnarray}
\left(\vec{k}_{2}\circ\vec{k}_{2}\right)^{T}A\vec{1}-\left(\vec{1}^{T}A\vec{1}\right)^{-1}\left(\vec{1}^{T}A^{3}\vec{1}\right)^{2}\geq0
\end{eqnarray}

\noindent where $m$ is the network's number of edges, $\vec{1}$
is an all-ones vector and $\circ$ denotes the Hadamard product. 
\end{lem}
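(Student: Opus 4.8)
The plan is to unwind the matrix--vector expression into an elementary scalar inequality that has nothing to do with graphs. First I would expand each term as a sum over the vertices of $G$. Writing $\tilde k_i=(\vec k_2)_i=(A^2\vec 1)_i$ and recalling $\vec k=A\vec 1$, the symmetry of $A$ gives
\[
(\vec k_2\circ\vec k_2)^{T}A\vec 1=\sum_{i=1}^{n}k_i\,\tilde k_i^{2},\qquad \vec 1^{T}A\vec 1=\sum_{i=1}^{n}k_i=2m,\qquad \vec 1^{T}A^{3}\vec 1=(A\vec 1)^{T}(A^{2}\vec 1)=\sum_{i=1}^{n}k_i\,\tilde k_i .
\]
Substituting these into the left-hand side of the claimed inequality turns it into
\[
\sum_{i=1}^{n}k_i\,\tilde k_i^{2}-\frac{\bigl(\sum_{i=1}^{n}k_i\,\tilde k_i\bigr)^{2}}{\sum_{i=1}^{n}k_i}\;\ge\;0 .
\]

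Because $G$ is connected it has at least one edge, so $\sum_i k_i=2m>0$ and the quotient is well defined. I would then close the argument with the Cauchy--Schwarz inequality applied to the vectors with components $\sqrt{k_i}$ and $\sqrt{k_i}\,\tilde k_i$, namely $\bigl(\sum_i k_i\tilde k_i\bigr)^{2}=\bigl(\sum_i \sqrt{k_i}\cdot\sqrt{k_i}\,\tilde k_i\bigr)^{2}\le\bigl(\sum_i k_i\bigr)\bigl(\sum_i k_i\tilde k_i^{2}\bigr)$, which is exactly the displayed inequality. Equivalently, and perhaps more transparently, one notes that $p_i:=k_i/(2m)$ is a probability distribution on $V$ and that the expression above equals $2m$ times $\mathrm{Var}_{p}(\tilde k)\ge 0$; this is just the familiar fact that the denominator of any Pearson correlation coefficient is a variance, here the variance of the two-walks degree evaluated at an edge-endpoint chosen uniformly at random.

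I do not expect a genuine obstacle. The only points that need a little care are the three bookkeeping identities above (which rely on $\vec k=A\vec 1$ and $A=A^{T}$), and the observation that connectedness is what guarantees $\vec 1^{T}A\vec 1\neq 0$, so that the factor $(\vec 1^{T}A\vec 1)^{-1}$ in the statement is legitimate; once these are in place the inequality is immediate. If desired, the probabilistic reading also identifies the equality case: the denominator vanishes iff $\tilde k_i$ is constant over all vertices incident to an edge, hence over all of $V$ since $G$ is connected.
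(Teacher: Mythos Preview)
Your argument is correct and, in fact, cleaner than the paper's. Both proofs reduce the claim to an application of Cauchy--Schwarz after identifying $(\vec k_2\circ\vec k_2)^T A\vec 1=\sum_i k_i\tilde k_i^{2}$, $\vec 1^T A\vec 1=\sum_i k_i$, and $\vec 1^T A^3\vec 1=\sum_i k_i\tilde k_i$, but they apply the inequality to different pairs of vectors. You pair $\sqrt{k_i}$ with $\sqrt{k_i}\,\tilde k_i$, which immediately yields $\bigl(\sum_i k_i\tilde k_i\bigr)^2\le\bigl(\sum_i k_i\bigr)\bigl(\sum_i k_i\tilde k_i^2\bigr)$ and finishes the proof in one line; your variance reading with $p_i=k_i/2m$ makes the nonnegativity manifest and also pins down the equality case. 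The paper instead pairs $k_i$ with $\tilde k_i$, obtaining only $\bigl(\sum_i k_i\tilde k_i\bigr)^2\le\bigl(\sum_i k_i^2\bigr)\bigl(\sum_i \tilde k_i^2\bigr)$, which does not match the target directly; it must then invoke a second combinatorial step, namely that $nk_i-k_j\ge 0$ for all $i,j$ in a connected simple graph (since $k_i\ge 1$ and $k_j\le n-1$), to squeeze out nonnegativity of the residual sum. Your route avoids this extra ingredient entirely and works verbatim for any nonnegative weights $k_i$, not just degrees in a connected simple graph.
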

\begin{proof}
By the Cauchy-Bunyakovsky-Schwarz inequality:

\begin{align}
\left(\vec{1}^{T}A\vec{1}\right)^{-1}\left(\vec{1}^{T}A^{3}\vec{1}\right)^{2} & =\left(\vec{1}^{T}A\vec{1}\right)^{-1}\left(\sum_{i=1}^{V}k_{i}\tilde{k}_{i}\right)^{2}
\leq\left(\vec{1}^{T}A\vec{1}\right)^{-1}\left(\sum_{i=1}^{V}k_{i}^{2}\right)
\left(\sum_{i=1}^{V}\tilde{k}_{i}^{2}\right)\\
 & =\frac{2|P_{1}|+2|P_{2}|}{2|P_{1}|}\sum_{i=1}^{V}\tilde{k}_{i}^{2}=\sum_{i=1}^{V}\tilde{k}_{i}^{2}\left(\frac{1}{n}\frac{\sum_{i=1}^{V}k_{i}^{2}}{\sum_{i=1}^{V}k_{i}}\right). \nonumber
\end{align}
Then, we have

\begin{align}
\left(\vec{k}_{2}\circ\vec{k}_{2}\right)^{T}A\vec{1}-\left(\vec{1}^{T}A\vec{1}\right)^{-1}\left(\vec{1}^{T}A^{3}\vec{1}\right)^{2} & \geq\sum_{i=1}^{V}\tilde{k}_{i}^{2}\left(k_{i}-\frac{1}{n}\frac{\sum_{j=1}^{V}k_{j}^{2}}{\sum_{j=1}^{V}k_{j}}\right)\\
 & =\sum_{i=1}^{V}\tilde{k}_{i}^{2}\frac{nk_{i}\sum_{j=1}^{V}k_{j}\left(nk_{i}-k_{j}\right)}{n\sum_{j=1}^{V}k_{j}}.\nonumber
\end{align}

As $G$ is a connected simple graph, $k_{i}\geq1$ and the maximum
degree in the graph is $n-1$, then, $nk_{i}-k_{j}\geq0$, and hence
the last term is always greater than or equal to zero, which proves
the result. 
\end{proof}

What remains now for the proof of the main result is to express the
numerator $N\left(r\right)$ of the Pearson coefficient of the
two-walks degree - two-walks degree correlation in terms of subgraphs of the
graph (reminding that when the denominator is equal to zero, the Pearson Correlation coefficient is not defined). We can write $N\left(r\right)$ as follow 
\begin{eqnarray}
N\left(r\right)=\frac{1}{m}\sum_{\left(i,j\right)\in E}\tilde{k}_{i}\tilde{k}_{j}-\left\{ \frac{1}{m}\sum_{\left(i,j\right)\in E}\frac{1}{2}\left[\tilde{k}_{i}+\tilde{k}_{j}\right]\right\} ^{2}\label{eq:r_tilde_k}
\end{eqnarray}

\noindent where $\tilde{k}_{i}$ and $\tilde{k}_{j}$ are the
two-walks degrees of nodes $i$ and $j$, respectively, located at both
ends of link $\left(i,j\right)\in E$. We can now rewrite the sums in Eq. (\ref{eq:r_tilde_k})
as:

\begin{equation}
\sum_{\left(i,j\right)\in E}\tilde{k}_{i}\tilde{k}_{j}=\frac{1}{2}\vec{k}_{2}^{T}A\vec{k}_{2},
\end{equation}

\begin{equation}
\sum_{\left(i,j\right)\in E}\left(\tilde{k}_{i}+\tilde{k}_{j}\right)=\vec{k}_{2}^{T}\vec{k}.
\end{equation}


Let us now find the expressions for the two terms contributing to
$N\left(r\right)$. The first is given by

\begin{eqnarray}
& \frac{1}{m}\sum_{\left(i,j\right)\in E}\tilde{k}_{i}\tilde{k}_{j} =& \frac{1}{2m}\left(2|P_{1}|+12|P_{2}| +12|P_{3}|+4|P_{4}|+2|P_{5}|\right. +\\
& & +54|C_{3}|+24|C_{4}|+10|C_{5}|+12|S_{1,3}|+22|S_{T1S}| + \nonumber\\
& & \left.+4|S_{Y}|+4|S_{T2S}|+4|S_{T1D}|+12|S_{C/}|+4|S_{C1S}|\right), \nonumber
\end{eqnarray}
where $|P_{4}|$ and $|P_{5}|$ are the number of paths of order $4$
and $5$, respectively, and $|S_{Y}|$ is the number of fragments
$S_{Y}$ which are illustrated in Table \ref{tab:subgraphs}. We will
give formulas for calculating these fragments for the sake of completeness
of the paper.

For the second term contributing to $N\left(r\right)$ we have

\begin{equation}
\left\{ \frac{1}{m}\sum_{\left(i,j\right)\in E}\frac{1}{2}\left[\tilde{k}_{i}+\tilde{k}_{j}\right]\right\} ^{2}=\frac{1}{\left(2m\right)^{2}}\left(2|P_{1}|+4|P_{2}|+2|P_{3}|+6|C_{3}|{}^{2}\right).
\end{equation}

Thus, we can rewrite $N\left(r\right)$ as:

\begin{equation}
N\left(r\right)=\frac{1}{2m}\left(R-2|P_{2}||P_{2/1}|\left(|P_{3/2}|+C+2\right)^{2}\right),\label{eq:Nr_sub}
\end{equation}
which proves the main result.

Let us now give the formulas for calculating the subgraphs remaining
in the expression of the two-walks degree assortativity which have not
been previously defined. The proofs of these results are based on
the strategy developed and explained in \cite{Estrada_Knight} and
are not given here as they are lengthly and technical.

\begin{lem}
\label{lem:S_y} Let $G=\left(V,E\right)$ be a simple graph. Let
$k_{i}$ be the degree of the vertex $i$. Let $|S_{Y}|$ be the number
of subgraphs $S_{Y}$ (see Table \ref{tab:subgraphs}). Then, 
\end{lem}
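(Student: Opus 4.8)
The plan is to apply the ``over-count and invert'' technique of \cite{Estrada_Knight}: I would write down a quantity computable directly from the degree sequence and the powers of $A$, show that it equals an explicit integer combination of the numbers of a few related subgraphs, and then solve for $|S_{Y}|$ using the formulas for those subgraphs that are already available above (in particular for $|S_{1,3}|$ and $|S_{T1S}|$).

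The key structural observation is that $S_{Y}$ is the unique tree on five vertices with degree sequence $(3,2,1,1,1)$, so every copy of $S_{Y}$ in $G$ is obtained from a claw $S_{1,3}$ --- whose centre is the degree-$3$ vertex --- by adjoining one further pendant edge at one of its three leaves. Conversely, starting from \emph{any} claw of $G$, say with centre $i$ and leaf set $T\subseteq N(i)$, $|T|=3$, and choosing a leaf $\ell\in T$ together with a vertex $e\in N(\ell)\setminus\{i\}$, one always produces a four-edge subgraph; this subgraph is a copy of $S_{Y}$ exactly when $e\notin T\cup\{i\}$, and it is a copy of $S_{T1S}$ (a triangle with one pendant) exactly when $e$ is one of the other two leaves of $T$. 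No other coincidence is possible --- an edge incident with a leaf of a claw can only reach a fresh vertex or a vertex at distance two inside the claw, so in particular no four-cycle fragment ever arises. I would therefore first establish the identity
\[
\Sigma \;:=\; \sum_{i=1}^{n}\ \sum_{\substack{T\subseteq N(i)\\ |T|=3}}\ \sum_{\ell\in T}\bigl(k_{\ell}-1\bigr)\;=\;|S_{Y}|\;+\;2\,|S_{T1S}|,
\]
where the coefficient $2$ records the two ways of realising a fixed triangle-with-pendant as ``claw plus a leaf-to-leaf edge'', while each copy of $S_{Y}$, which contains a unique claw, is produced exactly once.

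It then remains to put $\Sigma$ in closed form. Each neighbour $\ell$ of $i$ lies in exactly $\binom{k_{i}-1}{2}$ of the three-element subsets of $N(i)$, and $\sum_{\ell\in N(i)}(k_{\ell}-1)=\tilde{k}_{i}-k_{i}$, so $\Sigma=\sum_{i}\binom{k_{i}-1}{2}\bigl(\tilde{k}_{i}-k_{i}\bigr)=\tfrac12\sum_{i}(k_{i}-1)(k_{i}-2)\tilde{k}_{i}-3|S_{1,3}|$, the last step using $|S_{1,3}|=\tfrac16\sum_{i}k_{i}(k_{i}-1)(k_{i}-2)$ from the lemma above. Substituting this together with the known value of $|S_{T1S}|$ into the identity yields the asserted expression for $|S_{Y}|$, namely $|S_{Y}|=\tfrac12\sum_{i}(k_{i}-1)(k_{i}-2)\tilde{k}_{i}-3|S_{1,3}|-2|S_{T1S}|$, or in matrix form $\tfrac12\bigl((\vec{k}-\vec{1})\circ(\vec{k}-\vec{2})\bigr)^{T}\vec{k}_{2}-3|S_{1,3}|-2|S_{T1S}|$.

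The part that is genuinely delicate --- the one the authors describe as lengthy and technical --- is the combinatorial bookkeeping behind the displayed identity: one must verify that every copy of $S_{Y}$ contains exactly one claw and is generated exactly once, that every copy of $S_{T1S}$ contains exactly one claw and is generated exactly twice, and that the enumeration of degenerate placements of $e$ (fresh vertex, other leaf, centre) is exhaustive. Getting these automorphism multiplicities right is where an error would most easily slip in; once they are pinned down, the remainder is a routine double count over neighbourhoods followed by substitution. A handful of small checks (the graph $S_{Y}$ itself, $K_{4}$, $K_{1,4}$, and a triangle carrying two successive pendants) fix the constants and confirm the formula.
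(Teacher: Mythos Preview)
Your argument is correct and is precisely the over-count that the paper's edge-sum formula encodes: rewriting $\sum_{(i,j)\in E}\bigl[\binom{k_i-1}{2}(k_j-1)+\binom{k_j-1}{2}(k_i-1)\bigr]$ as $\sum_i\binom{k_i-1}{2}(\tilde{k}_i-k_i)$ shows that your closed form $\tfrac12\sum_i(k_i-1)(k_i-2)\tilde{k}_i-3|S_{1,3}|-2|S_{T1S}|$ is identical to the paper's expression, and your ``claw plus one extra pendant edge'' enumeration with multiplicities $1$ for $S_Y$ and $2$ for $S_{T1S}$ is exactly the combinatorial identity underlying it. The paper does not actually supply a proof here (it defers to \cite{Estrada_Knight}), so there is nothing further to compare; your case analysis and automorphism bookkeeping are complete as stated.
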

\begin{equation}
|S_{Y}|=\sum_{\left(i,j\right)\in E}\left\{ \left(\begin{array}{c}
k_{i}-1\\
2
\end{array}\right)\left(k_{j}-1\right)+\left(\begin{array}{c}
k_{j}-1\\
2
\end{array}\right)\left(k_{i}-1\right)\right\} -2|S_{T1S}|.
\end{equation}

\begin{lem}
Let $G=\left(V,E\right)$ be a simple graph. Then, the number of subgraphs
$|P_{4}|$ and \textup{$|P_{5}|$} in $G$ are given by, respectively,

\begin{equation}
|P_{4}|=\frac{1}{2}\vec{1}^{T}A^{4}\vec{1}-|P_{1}|-4|P_{2}|-2|P_{3}|-9|C_{3}|-4|C_{4}|-6|S_{1,3}|-4|S_{T1S}|,
\end{equation}

\begin{eqnarray}
&|P_{5}| & =\frac{1}{2}\vec{1}^{T}A^{5}\vec{1}-|P_{1}|-6|P_{2}|-6|P_{3}|-2|P_{4}|-27|C_{3}|+ \\
& & -12|C_{4}|-5|C_{5}|-6|S_{1,3}|-2|S_{Y}|-11|S_{T1S}|-2|S_{T2S}|+ \nonumber\\
& & -2|S_{T1D}|-6|S_{C/}|-2|S_{C1S}|.\nonumber
\end{eqnarray}
\end{lem}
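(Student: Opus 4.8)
The plan is to obtain both formulas by counting the walks of length $\ell\in\{4,5\}$ in $G$ in two different ways. On one hand, $\vec{1}^{\,T}A^{\ell}\vec{1}=\sum_{i,j}(A^{\ell})_{ij}$ is precisely the number of walks $v_0v_1\cdots v_\ell$ with $v_{t-1}v_t\in E$ for all $t$. On the other hand, I would sort these walks by the subgraph they sweep out: to a walk $W$ assign its \emph{trace} $H(W)$, the subgraph with vertex set $\{v_0,\dots,v_\ell\}$ and edge set $\{v_{t-1}v_t:1\le t\le\ell\}$. Every trace is a connected graph with at most $\ell$ edges, and for a fixed isomorphism type $H$ the number of length-$\ell$ walks of $G$ whose trace is a prescribed copy of $H$ depends only on $H$ and $\ell$ (such walks stay entirely inside that copy); call this universal constant $c_{\ell,H}$, equal to the number of length-$\ell$ walks on $H$ that visit every vertex and traverse every edge of $H$. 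Summing over all copies of all admissible types gives
\[
\vec{1}^{\,T}A^{\ell}\vec{1}=\sum_{H}c_{\ell,H}\,|H|,
\]
where $|H|$ denotes the number of subgraphs of $G$ isomorphic to $H$ and the sum is over the finitely many connected graphs with at most $\ell$ edges that can occur as a trace.

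Next I would make this sum explicit. For $\ell=4$ a short case analysis shows the only possible traces are $P_1,P_2,P_3,P_4$, $C_3$, $C_4$, $S_{1,3}$ and $S_{T1S}$: a trace with exactly four edges must be covered by a walk that uses each edge once, i.e.\ must admit an Eulerian trail, and among connected four-edge graphs that leaves only $P_4$, $C_4$ and the paw $S_{T1S}$, while the connected graphs with at most three edges are just the short paths, $C_3$ and the claw $S_{1,3}$. For $\ell=5$ the same reasoning --- now permitting one repeated edge --- additionally admits $P_5$, $C_5$, the subdivided claw $S_Y$, and the decorated triangle/cycle fragments $S_{T2S}$, $S_{T1D}$, $S_{C/}$ and $S_{C1S}$ of Table~\ref{tab:subgraphs}. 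Each constant $c_{\ell,H}$ is then a finite computation on the small graph $H$ alone: for an $\ell$-edge trace (in particular for $P_\ell$, where $c_{\ell,P_\ell}=2$) it is the number of Eulerian trails of $H$ counted with orientation, and for every $H$ with fewer than $\ell$ edges one either lists the covering walks by hand or computes $\vec{1}^{\,T}A_H^{\ell}\vec{1}$ for the fixed graph $H$ (with $A_H$ its adjacency matrix) and strips off the contributions $c_{\ell,H'}|H'|$ of the proper subgraphs $H'\subsetneq H$, which are known by induction on the number of edges. Isolating the term $c_{\ell,P_\ell}|P_\ell|=2|P_\ell|$ in the identity and inserting the closed forms for $|P_1|,|P_2|,|P_3|$, $|C_3|,|C_4|,|C_5|$, $|S_{1,3}|$, $|S_Y|$, $|S_{T1S}|$, $|S_{T2S}|$, $|S_{T1D}|$, $|S_{C/}|$, $|S_{C1S}|$ from the earlier lemmas (and, for $\ell=5$, the formula for $|P_4|$ just derived) produces the two stated identities after division by two.

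The part that takes care --- and the reason the authors merely cite \cite{Estrada_Knight} for it --- is exactly this bookkeeping: one must be certain the list of admissible trace types is complete, so that no hidden subgraph contribution is dropped, and that every multiplicity $c_{\ell,H}$ is correct, since a single wrong constant corrupts the final coefficients. The $\ell=5$ case is the delicate one, because its traces include the branched fragments $S_Y$ and $S_{T2S}$ together with the chorded and pendant four-cycle fragments $S_{C/}$ and $S_{C1S}$, besides the paths and short cycles. A systematic way to discharge it is to enumerate all connected graphs with at most $\ell$ edges, discard those that cannot support a length-$\ell$ covering walk (the Eulerian-type obstruction), and for each survivor extract $c_{\ell,H}$ from $\vec{1}^{\,T}A_H^{\ell}\vec{1}$ by the recursive peeling just described; this both certifies completeness of the list and pins down every constant.
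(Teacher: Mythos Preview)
Your outline is correct and is precisely the strategy the paper has in mind: the authors do not prove this lemma at all but simply state that ``the proofs of these results are based on the strategy developed and explained in \cite{Estrada_Knight},'' and that strategy is exactly the trace-by-subgraph decomposition of $\vec{1}^{\,T}A^{\ell}\vec{1}$ you describe, with the coefficients $c_{\ell,H}$ obtained by the recursive peeling on small graphs. There is nothing to add on the comparison side; you have supplied more detail than the paper itself does.
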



\section{Computational results}
\label{Sec:Num}
\subsection{Small graphs}

In this Section we describe the results obtained for all the 261,000
connected graphs with 9 unlabelled nodes. We have calculated the degree
and two-walks degree assortativities for these graphs (see Fig. \ref{9nodesfig}).
As we can see there is no trivial correlation between the two indices,
which indicates that the new index does not duplicate the structural
information contained in the degree assortativity and consequently
gives some new structural insights about graphs. This conclusion is
also easily obtained by considering the subgraph contributions to
both measures.

\begin{figure}
\begin{centering}
\includegraphics[width=0.6\textwidth]{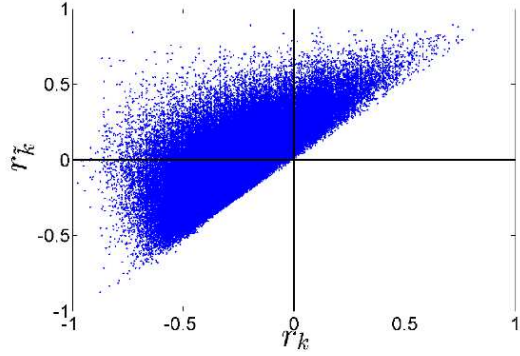} \label{9nodesfig} 
\par\end{centering}
\caption{Degree and two-walks degree assortativities for all the connected graphs
with 9 unlabelled nodes.}
\end{figure}


According to computer calculations 7\% of the networks are assortative-assortative
by both measures (AA), 60\% are disassorartive-disassortative (DD)
and 33\% are disassortative by degree and assortative by two-walks degree
(DA). The main observation is that there are no graphs which are degree assortative and two-walks degree disassortative (AD). We conjecture that these graphs
cannot exist. Computer calculations show that $R\gtrsim2|P_{2}|(|P_{3,2}|+C)(|P_{3,2}|+C+2)^{2}$.
Therefore, we can express the numerator of the neighbourhood assortativity
Eq. (\ref{eq:Nr_sub}) as follows:

\begin{eqnarray}
N\left(r\right)\gtrsim\frac{1}{m}|P_{2}|(|P_{3,2}|+C-|P_{2/1}|)(|P_{3,2}|+C+2)^{2}
\end{eqnarray}

Using the results from \cite{Estrada11}, if $r_{k}\geq0$, then
$N\left(r\right)\geq0$. The intuition behind this result is very
simple. Nodes that belong to a degree assortativitive network tend
to be linked to other nodes with similar degree. Therefore, their
two-walks degrees tend to be similar too.

Generally, the second-neighbour degree assortativity depends on the balance between
four structural factors: the weighted sum of subgraphs given by $R/|P_{2}|$,
transitivity ($C$), intermodular connectivity ($|P_{3,2}|$), relative
branching ($|P_{2,1}|$). The first three produce a positive contribution
to the two-walks degree assortativity of a network, while branching is
more likely associated with disassortative networks.

\subsection{Real-world networks}

\textcolor{black}{In this subsection we study of group of 49 real-world
networks representing systems in ecological (E), biological (B), social
(S), technological (T) and socio-economic (SE) envirnments. The networks
are described in the Appendix of this paper. We have calculated the
degree and two-walks degree assortativities for these networks (see Fig.
\ref{Scatterplot}). According to these results 14\% of the networks
are assortative-assortative (AA) according to both measures, 24\%
are disassorartive-disassortative (DD) and the majority of networks
analyzed (61\%) are diassortative-assortative (DA). This confirms
our previous observation that there are no graphs/networks which are
assortative-disassorartive (AD). The analysis of the networks according
to the functions shows the following trends: 53\% of the ecological
networks analyzed are DD, 27\% are DA and 20\% are AA; 50\% of the
social networks analyzed are DA, 30\% are AA and 20\% are DD; 80\%
of technological networks are DA, 10\% are AA and 10\% are DD. Finally,
100\% of biological networks considered are DA. They included
9 protein-protein interaction networks (PINs), 3 transcription networks
and 3 brain networks. This is a remarkable observation because it
is the only single functional class of networks which is formed by
one structural class, i.e., DA. }

\begin{figure}
\begin{centering}
\includegraphics[width=0.7\textwidth]{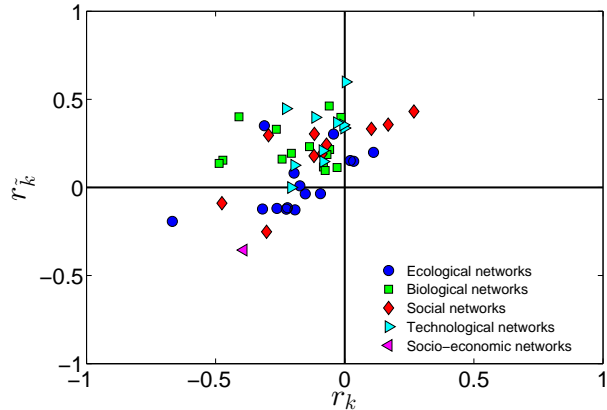} 
\par\end{centering}
\caption{Degree and two-walks degree assortativities for all the real-world networks studied in this work.}

\label{Scatterplot} 
\end{figure}

An important characteristic of our current approach is that we can
understand the structural causes for the different kinds of assortativity
in networks using the interpretation of these quantities in terms
of subgraphs of the graph. As we have seen before an important structural
feature of graphs allowing the transition from degree disassortative
to two-walks degree assortative is the presence of separators. It has
to be stressed that this is not a unique structural feature of this
kind of networks and more studies are needed to completely understand
the structural chracterization of this kind of networks. However,
it is easy to visualize the connectors in the small PIN of the bacterium
\textit{\textcolor{black}{B. subtilis}} (see left panel in Fig. \ref{Illustration networks}).
In Fig. \ref{Illustration networks} we also illustrate the degree
and two-walks degree of the nodes in the food web of ScotchBroom
and in the transcription network of \textit{\textcolor{black}{E. coli}}.
All of them displaying DA characteristics.

\begin{figure}
\begin{centering}
\includegraphics[width=0.3\textwidth]{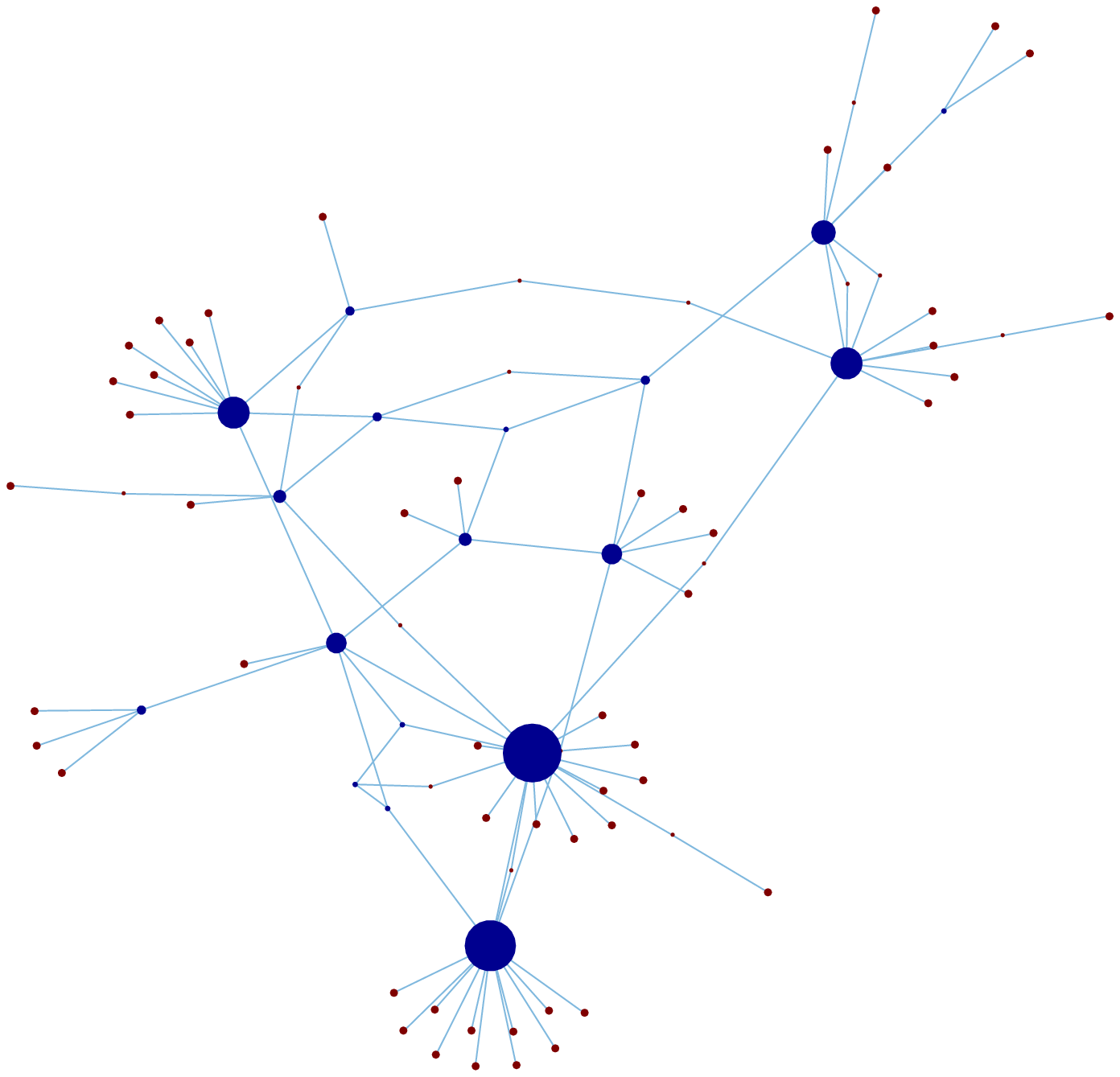}\includegraphics[width=0.3\textwidth]{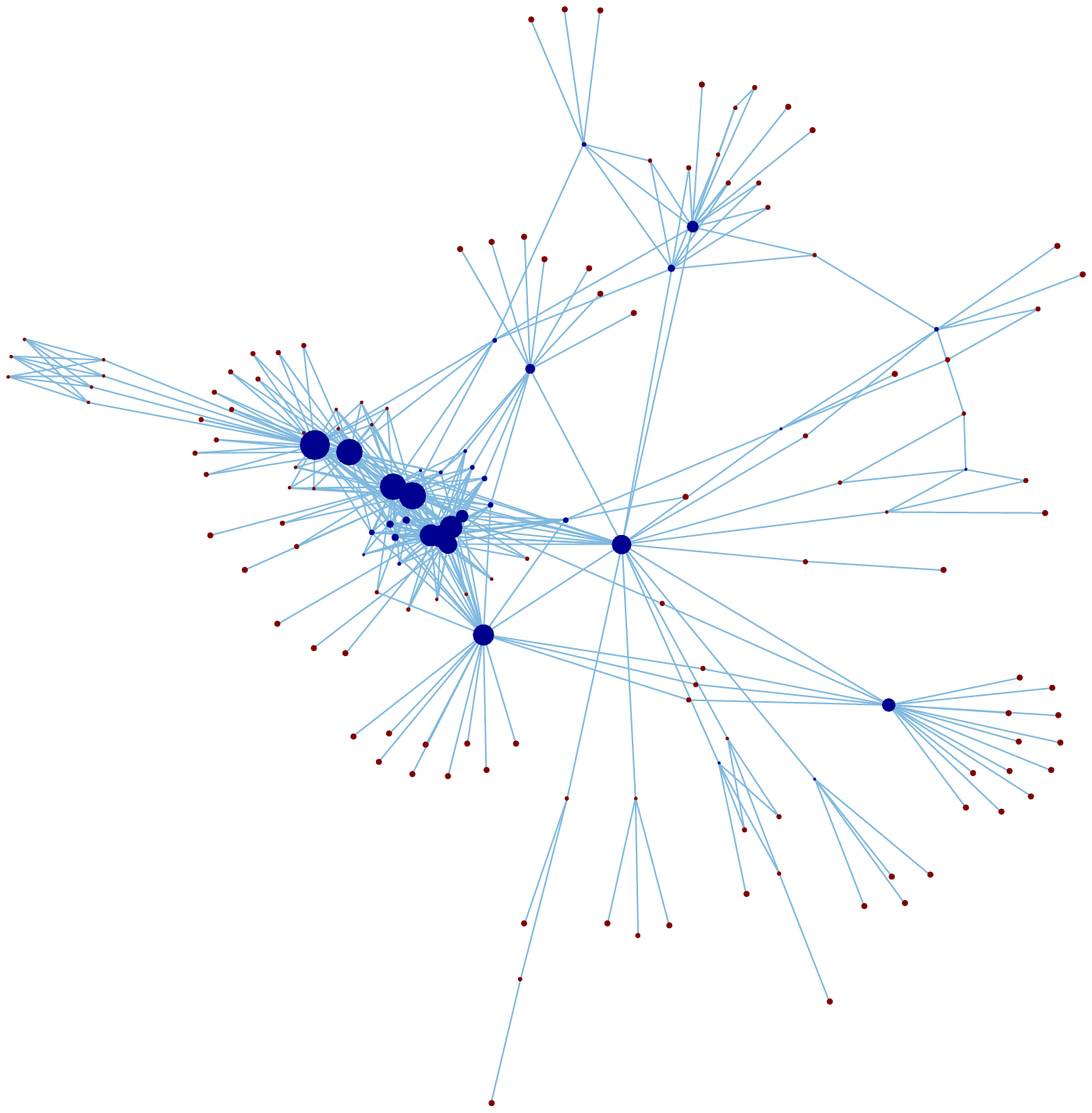}
\includegraphics[width=0.3\textwidth]{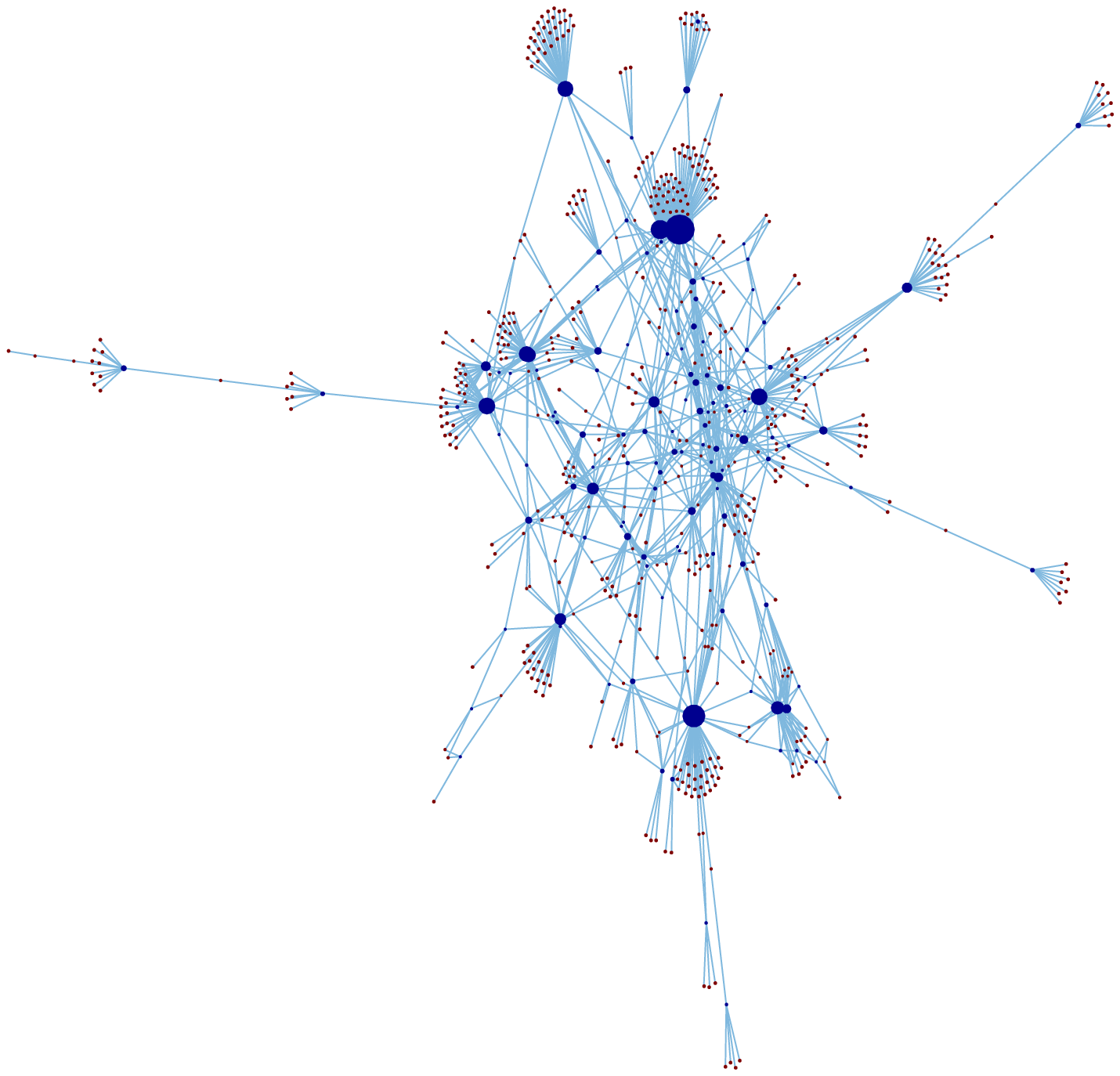} 
\par\end{centering}
\begin{centering}
\includegraphics[width=0.3\textwidth]{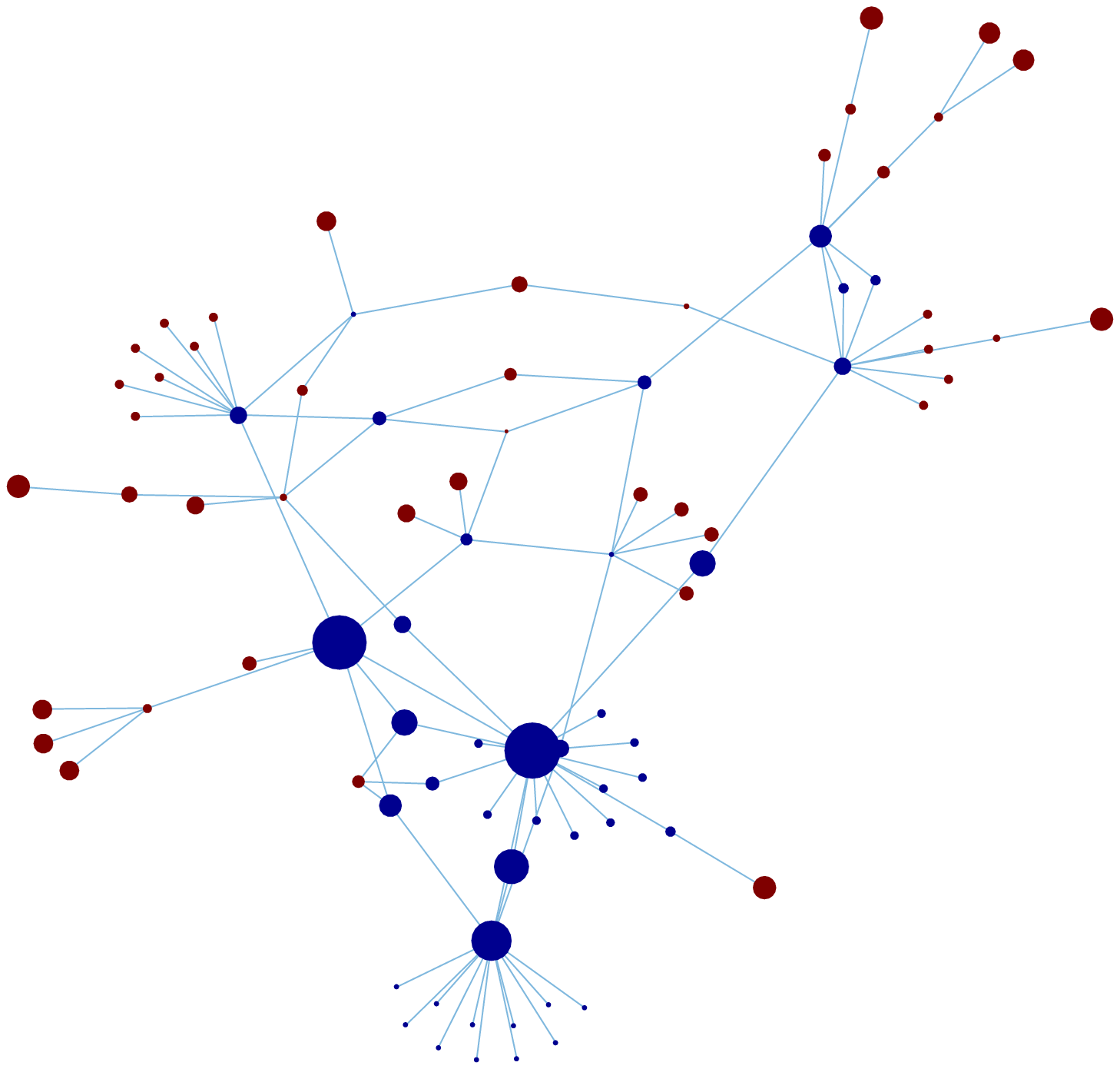}\includegraphics[width=0.3\textwidth]{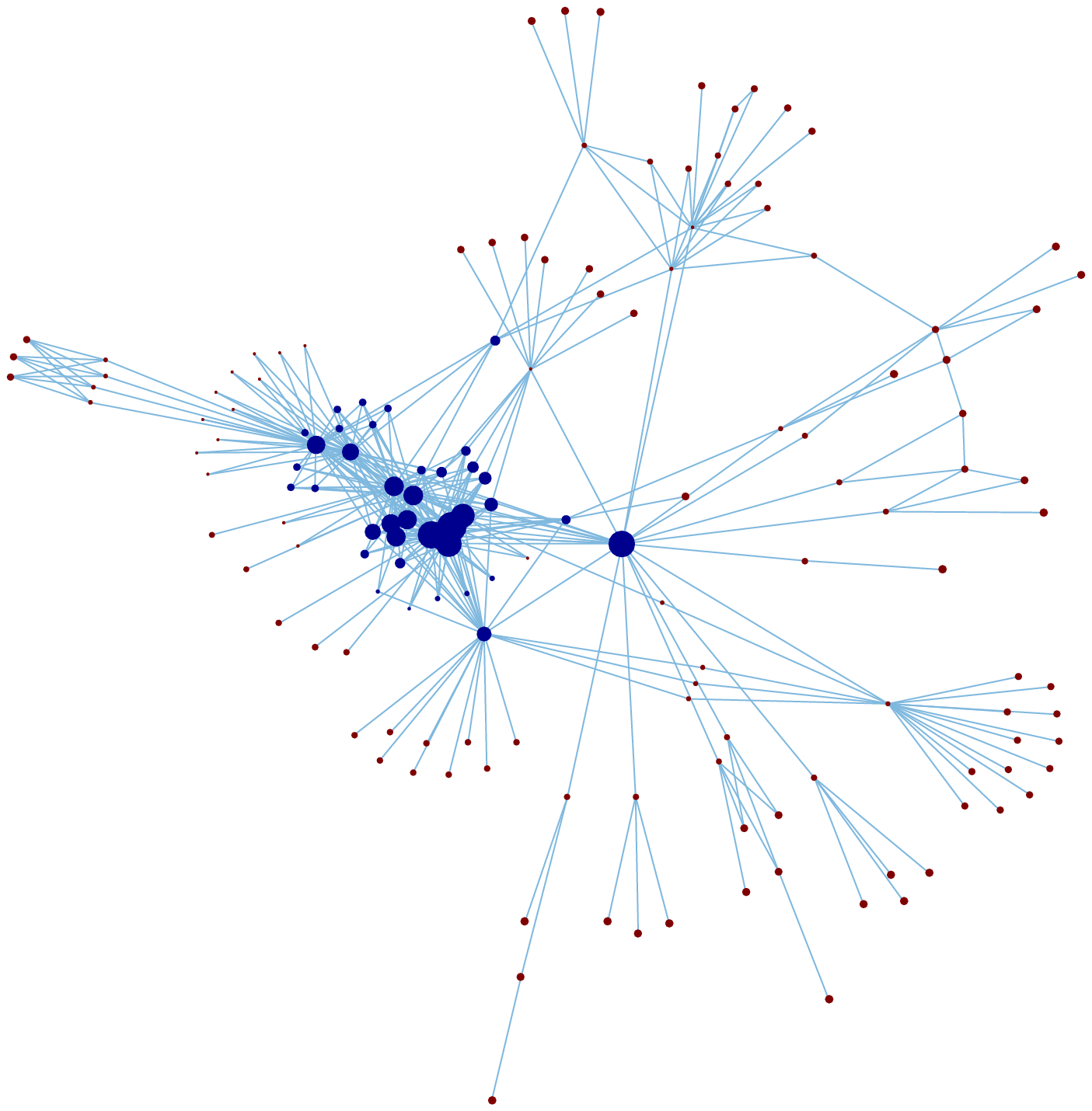}\includegraphics[width=0.3\textwidth]{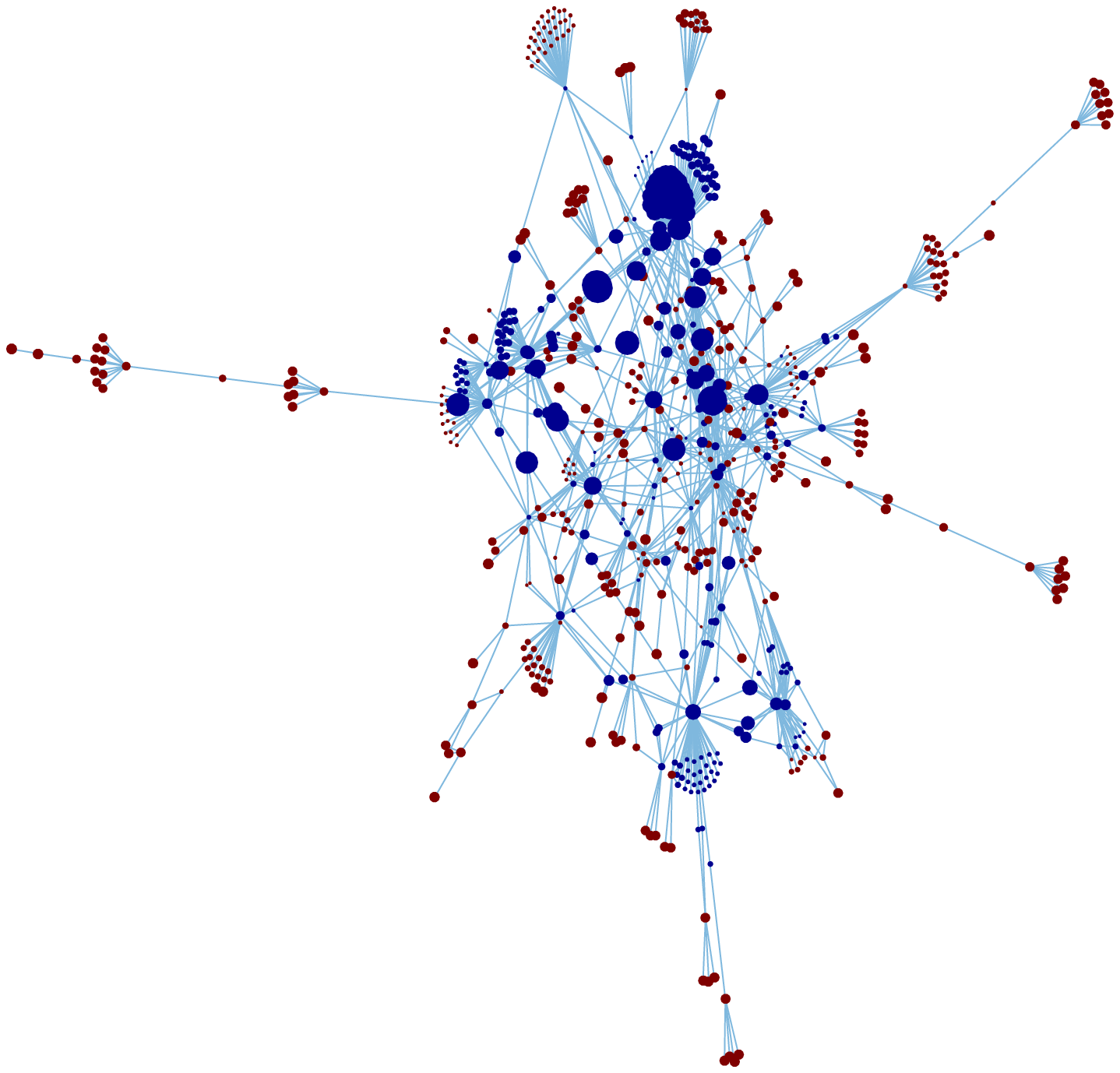} 
\par\end{centering}
\caption{Illustration of the differences between the degrees and mean degree of every node (top panels) and the same for the two-walks degrees (bottom panels) in the protein interaction network \textit{B. subtilis}, food web of Scotch Broom, and transcription network of yeast from left to right.}
\label{Illustration networks} 
\end{figure}

\section{Conclusions}
\label{Sec:Conclusion}
Here we have proposed an extension of the concept of degree assortativity
to one that account for thecorrelation between the degrees
of the nodes and their nearest neighbours in graphs and networks. This measure, here named the
two-walks degree assortativity, is expressed in terms of subgraphs of
the graph. As we have proved here there are a few more fragments contributing
to the two-walks degree assortativity than to the degree assortativity.
This clearly indicates that the new quantity accounts for more structural
information than the previous one. We have seen that both measures
are not linearly correlated neither for all the connected graphs
with 9 nodes nor for real-world networks. Further studies are needed
to understand the role of this quantity in the study of real-world
problems, as we have seen here, there are some apparently universal
features of some classes of networks in relation to this quantity.
For instance, all real-world biological networks studied here are
degree disassortative but two-walks assortative. The implications
of this observation for the study of the biological processes taking
place on these networks is far beyond the scope of this work.

\section*{Appendix: Real-world network dataset}

The real-world networks used in this paper belong to different domains:
ecological (includes food webs and ecosystems), social (networks of
friendships, communication networks, corporate relationships), technological
(internet, transport, software development networks), informational
(vocabulary networks, citations) and biological (protein-protein interaction
networks, transcriptional regulation networks). The dataset comprises
networks of different sizes, ranging from $N=29$ to $N=4941$ nodes.
The networks are listed in Table~\ref{tab:tab1}.




\begin{table}[ht]
\caption{Dataset of real-world networks: network name, domain, $N$ number
of nodes, $m$ number of links, reference, degree and two-walks degree assortative coefficients.}
\begin{tabular}{c l c c c c c c}
\hline 
No.  & Dataset  & Domain  & N  & m  & Ref. & $r_{k}$ &  $r_{\tilde{k}}$  \tabularnewline
\hline
1  & Drosophila PIN  & biological  & 3039  & 3715  & \cite{giot2003protein}  & -0.060 &	0.462 \tabularnewline 
2  & Hpyroli  & biological  & 710  & 1396  & \cite{refn16} & -0.243 &	0.161 \tabularnewline 
3  & KSHV  & biological  & 50  & 122  & \cite{refn5} & -0.058 &	0.215 \tabularnewline 
4  & MacaqueVisualCortex  & biological  & 30  & 190  & \cite{sporns2004motifs} & -0.030 & 	0.113 \tabularnewline 
5  & Malaria PIN  & biological  & 229  & 604  & \cite{refn32} & -0.083 &	0.116 \tabularnewline 
6  & Neurons  & biological  & 280  & 1973  & \cite{refn14} & -0.069 &	0.187 \tabularnewline 
7  & PIN-Afulgidus  & biological  & 32  & 38  & \cite{refn54} & -0.472 &	0.154 \tabularnewline 
8  & Pin-Bsubtilis  & biological  & 84  & 98  & \cite{refn53} & -0.486 &	0.136 \tabularnewline 
9  & PIN-Ecoli  & biological  & 230  & 695  & \cite{refn27} & -0.015&	0.397 \tabularnewline 
10  & PIN-Human  & biological  & 2783  & 6438  & \cite{rual2005towards} & -0.137 &	0.231 \tabularnewline 
11  & Trans-Ecoli  & biological  & 328  & 456  & \cite{refn4} & -0.265 &	0.330\tabularnewline 
12  & Transc-yeast  & biological  & 662  & 1062  & \cite{refn4} & -0.410 &	0.401 \tabularnewline 
13  & Trans-urchin  & biological  & 45  & 80  & \cite{refn4} & -0.207 &	0.194 \tabularnewline 
\hline
14  & Benguela  & ecological  & 29  & 191  & \cite{refn21} & 0.0211 &	0.153 \tabularnewline 
15  & BridgeBrook  & ecological  & 75  & 547  & \cite{refn35} & -0.668 &	-0.193 \tabularnewline 
16  & Canton  & ecological  & 108  & 708  & \cite{refn33} & -0.226 &	-0.123 \tabularnewline 
17  & Chesapeake  & ecological  & 33  & 72  & \cite{refn20} & -0.196 &	0.081 \tabularnewline 
18  & Coachella  & ecological  & 30  & 261  & \cite{refn49} & 0.0347 &	0.148 \tabularnewline 
19  & ElVerde  & ecological  & 156  & 1441  & \cite{refn38} &-0.174 &	0.009 \tabularnewline 
20  & ReefSmall  & ecological  & 50  & 524  & \cite{refn1} & -0.193 & -0.127 \tabularnewline 
21  & ScotchBroom  & ecological  & 154  & 370  & \cite{refn26} & -0.311 &	0.350 \tabularnewline 
22  & Shelf  & ecological  & 81  & 1476  & \cite{link2002doe}  & -0.094 &	-0.035 \tabularnewline 
23  & Skipwith  & ecological  & 35  & 364  & \cite{refn29} & -0.319	& -0.122 \tabularnewline 
24  & StMarks  & ecological  & 48  & 221  & \cite{refn30} & 0.111 &	0.199 \tabularnewline 
25  & StMartin  & ecological  & 44  & 218  & \cite{refn2} & -0.153 & -0.0365 \tabularnewline 
26  & Stony  & ecological  & 112  & 832  & \cite{refn36} & -0.222 &	-0.115 \tabularnewline 
27  & Ythan1  & ecological  & 134  & 597  & \cite{refn15} & -0.263 &	-0.119 \tabularnewline 
\hline
28  & World Trade  & economic & 80  & 875  & \cite{batagelj2006analysis} & -0.392 &	-0.355 \tabularnewline 
\newline
29  & SmallW  & informational  & 233  & 994  & \cite{refn9} & -0.303 &	-0.251 \tabularnewline 
\hline
30  & ColoSPG  & social  & 324  & 347  & \cite{refn51} & -0.295 &	0.296\tabularnewline 
31  & CorporatePeople  & social  & 1586  & 13126  & \cite{refn43} & 0.268 & 0.431 \tabularnewline 
32  & Dolphins  & social  & 62  & 159  & \cite{refn37} & -0.044 &	0.303 \tabularnewline 
33  & Drugs  & social  & 616  & 2012  & \cite{batagelj2006analysis} & -0.117 & 	0.304 \tabularnewline 
34  & Hi-tech  & social  & 33  & 91  & \cite{refn23} & -0.087 &	0.191 \tabularnewline 
35  & Geom  & social  & 3621  & 9461  & \cite{batagelj2006analysis}  & 0.168 &	0.356\tabularnewline 
36  & PRISON-Sym  & social  & 67  & 142  & \cite{refn25} & 0.103 &	0.332 \tabularnewline 
37  & Sawmill  & social  & 36  & 62  & \cite{refn10} & -0.071 &	0.243 \tabularnewline 
38  & social3  & social  & 32  & 80  & \cite{refn31} & -0.119 &	0.179 \tabularnewline 
39  & Zackar  & social  & 34  & 78  & \cite{refn24} & -0.476 &	-0.089 \tabularnewline 
\hline
40  & electronic1  & technological  & 122  & 189  & \cite{refn40} & -0.002 &	0.337 \tabularnewline 
41  & electronic2  & technological  & 252  & 399  & \cite{refn40} & -0.006 &	0.355 \tabularnewline 
42  & electronic3  & technological  & 512  & 819  & \cite{refn40} & -0.030 &	0.367 \tabularnewline 
43  & Power grid  & technological  & 4941  & 6594  & \cite{watts1998collective}  & 0.003 &	0.599 \tabularnewline 
44  & Software Abi  & technological  & 1035  & 1736  & \cite{refn7} & -0.086 &	0.208 \tabularnewline 
45  & Software Digital  & technological  & 150  & 198  & \cite{refn7} & -0.228 &	0.447 \tabularnewline 
46  & Software Mysql  & technological  & 1480  & 4221  & \cite{refn7} & -0.083 &	0.147 \tabularnewline 
47  & Software-XMMS  & technological  & 971  & 1809  & \cite{refn7} & -0.114 &	0.397 \tabularnewline 
48  & Software-VTK  & technological  & 771  & 1369  & \cite{refn7} & -0.195 &	0.126 \tabularnewline 
49  & USA Air 97  & technological  & 332  & 2126  & \cite{refn9} & -0.208 &	-0.000 \tabularnewline 
\hline
\end{tabular}
\label{tab:tab1}
\end{table}
\vspace{2cm}


\begin{thebibliography}{99}
\bibitem{Albert02} R. Albert and A.-L. Barab\'asi, ``Statistical
mechanics of complex networks'', Rev. Mod. Phys. \textbf{74}, 47
(2002).

\bibitem{Newman03a} M. E. J. Newman, ``The Structure and Function
of Complex Networks'', SIAM Rev. \textbf{45}, 167 (2003).

\bibitem{Boccaletti06} S. Boccaletti, V. Latora, Y. Moreno, M. Chavez,
and D.-U. Hwang, ``Complex networks: Structure and dynamics'', Phys.
Rep. \textbf{424}, 175 (2006).

\bibitem{Strogatz01} S. H. Strogatz, ``Exploring complex networks'',
Nature (London) \textbf{410}, 268 (2001).

\bibitem{Barrat08} A. Barrat, M. Barth\'elemy, and A. Vespignani,
\textit{Dynamical Processes on Complex Networks} (Cambridge University
Press,UK, 2008).

\bibitem{Costa07} L. D. F. Costa, F. A. Rodrigues, G. Travieso, and
P. R. V. Boas, ``Characterization of complex networks: A survey of
measurements'', Adv. Phys. \textbf{56}, 167 (2007).

\bibitem{McPherson01} M. McPherson, L. Smith-Lovin and J. M. Cook,
``Birds of a feather: Homophily in social networks'',
Annual Review of Sociology \textbf{27}: 415 (2001).

\bibitem{Newman03b} M. E. J. Newman, ``Mixing patterns in networks'',
Phys. Rev. E \textbf{67}, 026126 (2003)

\bibitem{Newman02} M. E. J. Newman, ``Assortative Mixing in Networks'',
Phys. Rev. Lett. \textbf{89}, 208701 (2002).

\bibitem{Pastor01} R. Pastor-Satorras, A. V\'azquez, and A. Vespignani,
``Dynamical and Correlation Properties of the Internet'',
Phys. Rev. Lett. \textbf{87}, 258701 (2001).

\bibitem{Estrada11} E. Estrada, ``Combinatorial study of degree
assortativity in networks'', Phys. Rev. E \textbf{84}, 047101 (2011).

\bibitem{Piraveenan08} M. Piraveenan, M. Prokopenko and A. Zomaya,
``Local assortativeness in scale-free networks'', Europhys.
Lett. \textbf{84}, 28002 (2008).

\bibitem{Newman03c} M. E. J. Newman and J. Park, ``Why social networks
are different from other types of networks'', Phys. Rev.
E \textbf{68}, 036122 (2003).

\bibitem{Eguiluz02} V. M. Egu\'{\i}luz and K. Klemm, ``Epidemic Threshold
in Structured Scale-Free Networks'', Phys. Rev. Lett.
\textbf{89}, 108701 (2002).

\bibitem{Boguna03} M. Bogu\~n\'a, R. Pastor-Satorras, and A. Vespignani,
``Absence of Epidemic Threshold in Scale-Free Networks with Degree
Correlations'', Phys. Rev. Lett. \textbf{90}, 028701
(2003).

\bibitem{Dibernardo06} M. Di Bernardo, F. Garofalo, and F. Sorrentino,
``Effects of degree correlation on the synchronization of networks
of oscillators'', Int. J. of Bifurcation and Chaos \textbf{17},
3499 (2006).

\bibitem{Noldus15} R. Noldus and P. Van Mieghem, ``Assortativity
in complex networks'', Journal of Complex Networks \textbf{3},
507 (2015).

\bibitem{Allen16} A. Allen-Perkins, Javier Galeano, and J. M. Pastor,
``Inducing self-organized criticality in a network toy model by neighborhood
assortativity'', Phys. Rev. E \textbf{94}, 052304 (2016).

\bibitem{Estrada12} E. Estrada, \textit{The structure of complex
networks. Theory and Applications} (Oxford University Press, UK, 2012).

\bibitem{alon1997finding} N. Alon, R. Yuster,  and U. Zwick, 
``Finding and counting given length cycles'', Algorithmica \textbf{17}, 3 (1997).

\bibitem{katz1953new} L. Katz, 
``A new status index derived from sociometric analysis'', Psychometrica \textbf{18}, 1 (1953).

\bibitem{bonacich1972factoring} P. Bonacich, 
``Factoring and weighting approaches to status scores and clique identification'', J. Math. Sociol. \textbf{2}, 1 (1972). 

\bibitem{grin1998anatomy} S. Brin and L. Page, 
``The anatomy of a large-scale hypertextual Web search engine'', Comput. Net. ISDN Syst. \textbf{33}, 1-7 (1998).

\bibitem{estrada2005subgraph} E. Estrada and J.A. Rodr\'{\i}guez-Vel\'azquez,
``Subgraph centrality in complex networks'',  Phys. Rev. E \textbf{71}, 056103 (2005).

\bibitem{Estrada_Knight}E. Estrada, and P. Knight, \textit{\textcolor{black}{A First Course on Network Theory}} (Oxford University Press, UK, 2012). 


\bibitem{giot2003protein} L. Giot, J. S. Bader, C. Brouwer, A. Chaudhuri, B. Kuang, Y. Li, Y. Hao, C. Ooi, B. Godwin, E. Vitols, and others, `` A protein interaction map of drosophila melanogaster'', Science \textbf{302}, 5651 (2003).

\bibitem{refn16} C-Y. Lin, C-L. Chen, C-S. Cho, L-M. Wang, C-M. Chang, P-Y. Chen, C-Z. Lo, and C.A. Hsiung, ``hp-DPI: Helicobacter pylori database of protein interactomes\-embracing experimental and inferred interactions'', Bioinformatics \textbf{21}, 7 (2005).

\bibitem{refn5} P. Uetz,Y-A. Dong, C. Zeretzke, C. Atzler, A. Baiker, B. Berger, S.V. Rajagopala, M. Roupelieva, D. Rose, E. Fossum, and others, ``Herpesviral protein networks and their interaction with the human proteome'', Science \textbf{311}, 5758 (2006).

\bibitem{sporns2004motifs} O. Sporns and R. K{\"o}tter, ``Motifs in brain networks'', PLoS Biol. \textbf{2}, 11 (2004).

\bibitem{refn32} D.J. LaCount, M. Vignali, R. Chettier, A. Phansalkar, R. Bell, J.R. Hesselberth, L.W. Schoenfeld, I. Ota, S. Sahasrabudhe, C. Kurschner, and others, ``A protein interaction network of the malaria parasite Plasmodium falciparum'', Nature \textbf{438}, 7064 (2005).

\bibitem{refn14} J.G. White, E. Southgate, J.N. Thomson, and S. Brenner, ``The structure of the nervous system of the nematode Caenorhabditis elegans'', Philos Trans R Soc Lond B Biol Sci \textbf{314}, 1165 (1986).

\bibitem{refn54} M. Motz, I. Kober, C. Girardot, E. Loeser, U. Bauer, M. Albers, G. Moeckel, E. Minch, H. Voss, C. Kilger, and others, ``Elucidation of an archaeal replication protein network to generate enhanced PCR enzymes'', Journal of Biological Chemistry \textbf{277}, 18 (2002).


\bibitem{refn53} P. Noirot and M.F. Noirot-Gros,  ``Protein interaction networks in bacteria'', Current opinion in microbiology \textbf{7}, 5 (2004).

\bibitem{refn27} G. Butland, J.M. Peregr{\'\i}n-Alvarez, J. Li, W. Yang, X. Yang, V. Canadien, A. Starostine, D. Richards, B. Beattie,N. Krogan, and others, ``Interaction network containing conserved and essential protein complexes in Escherichia coli'',
Nature \textbf{433}, 7025 (2005).

\bibitem{rual2005towards} J.-F. Rual, K. Venkatesan, T. Hao, T. Hirozane-Kishikawa, A. Dricot, N. Li, G. F. Berriz, F. D. Gibbons, M. Dreze, N. Ayivi-Guedehoussou, and others, ``Towards a proteome-scale map of the human protein-protein interaction network'', Nature \textbf{437}, 7062 (2005).

\bibitem{refn4} R. Milo, S. Itzkovitz, N. Kashtan, R. Levitt,S. Shen-Orr, I. Ayzenshtat, M. Sheffer, and U. Alon, ``Superfamilies of evolved and designed networks'',  Science \textbf{303}, 5663 (2004).

\bibitem{refn21} P. Yodzis, ``Diffuse effects in food webs'', Ecology \textbf{81}, 1 (2000).

\bibitem{refn35} G.A. Polis, ``Complex trophic interactions in deserts: an empirical critique of food-web theory'', The American Naturalist \textbf{138},123--155 (1991).

\bibitem{refn33} C.R. Townsend, ``Disturbance, resource supply, and food-web architecture in streams'', Ecology Letters \textbf{1},200--2009 (1998).

\bibitem{refn20} R.R. Christian and J.J. Luczkovich, ``Organizing and understanding a winter's seagrass foodweb network through effective trophic levels'', Ecological modelling \textbf{117}, 1 (1999).

\bibitem{refn49} P.H. Warren,  ``Spatial and temporal variation in the structure of a freshwater food web'', Oikos 299-311 (1989).

\bibitem{refn38} D.P. Reagan and R.B. Waide, \textit{The food web of a tropical rain forest}, (University of Chicago Press, 1996).
 
\bibitem{refn1} S. Opitz, \textit{\textcolor{black}{Trophic interactions in Caribbean coral reefs} Vol. 1085}, (WorldFish, 1996).

\bibitem{refn26} J. Memmott, N.D. Martinez and J.E. Cohen, ``Predators, parasitoids and pathogens: species richness, trophic generality and body sizes in a natural food web'', Journal of Animal Ecology \textbf{69}, 1 (2000).

\bibitem{link2002doe} J. Link, ``Does food web theory work for marine ecosystems?'', Marine ecology progress series \textbf{230} (2002).

\bibitem{refn29} P. Yodzis, ``Diffuse effects in food webs'', Ecology \textbf{81}, 1 (2000).

\bibitem{refn30} L. Goldwasser and J. Roughgarden, ``Construction and analysis of a large Caribbean food web'', Ecology \textbf{74}, 4 (1993).

\bibitem{refn2} Neo D. Martinez, ``Artifacts or attributes? Effects of resolution on the Little Rock Lake food web''. Ecological Monographs \textbf{61}, 4 (1991).

\bibitem{refn36}  D. Baird and R.E. Ulanowicz,  ``The seasonal dynamics of the Chesapeake Bay ecosystem'', Ecological monographs \textbf{59}, 4 (1989).

\bibitem{refn15}  M. Huxham, S. Beaney, and D. Raffaelli,``Do parasites reduce the chances of triangulation in a real food web?'', Oikos 284-300 (1996).

\bibitem{batagelj2006analysis} V. Batagelj and A. Mrvar, ''Analysis of large networks'', \textit{Pajek workshop at XXVI Sunbelt Conference} (2006).

\bibitem{refn9} V. Batagelj and A. Mrvar, \textit{Pajek datasets},    url={http://vlado.fmf.uni-lj.si/pub/networks/data/} (2001),

\bibitem{refn51} J.J. Potterat, L. Phillips-Plummer, S.Q. Muth, R.B. Rothenberg, D.E. Woodhouse, T.S. Maldonado-Long, H.P. Zimmerman, and  J.B. Muth,  ``Risk network structure in the early epidemic phase of HIV transmission in Colorado Springs'', Sexually transmitted infections \textbf{78}, suppl 1 (2002).

\bibitem{refn43} G.F. Davis, M. Yoo, and W.E. Baker, ``The small world of the American corporate elite, 1982-2001'', Strategic organization \textbf{1}, 3 (2003).

\bibitem{refn37} D. Lusseau, K. Schneider, O. Boisseau, P. Haase, E. Slooten, and S.M.  Dawson,  ``The bottlenose dolphin community of Doubtful Sound features a large proportion of long-lasting associations'', Behavioral Ecology and Sociobiology \textbf{54}, 4 (2003).

\bibitem{refn23} D. Krackhardt, ``The ties that torture: Simmelian tie analysis in organizations'', Research in the Sociology of Organizations \textbf{16}, 1 (1999).

\bibitem{refn25} D. MacRae, ``Direct factor analysis of sociometric data'', Sociometry \textbf{23}, 4 (1960).

\bibitem{refn10} J.H. Michael  and J.G. Massey, ``Modeling the communication network in a sawmill'', Forest Products Journal \textbf{47}, 9 (1997).

\bibitem{refn31} L.D. Zeleny, ``Adaptation of research findings in social leadership to college classroom procedures'', Sociometry \textbf{13}, 4 (1950).

\bibitem{refn24} W.W. Zachary,  ``An information flow model for conflict and fission in small groups'', Journal of anthropological research \textbf{33}, 4 (1977).

\bibitem{refn40} R. Milo, S. Shen-Orr, S. Itzkovitz, N. Kashtan, D. Chklovskii, and U. Alon, ``Network motifs: simple building blocks of complex networks'', Science \textbf{298}, 5594 (2002).

\bibitem{watts1998collective} D.J. Watts and S.H. Strogatz, ``Collective dynamics of small-world networks'', Nature \textbf{393}, 6684 (1998).

\bibitem{refn7} C.R Myers, ``Software systems as complex networks: Structure, function, and evolvability of software collaboration graphs'', Phys. Rev. E \textbf{68}, 046116 (2003).


\end{thebibliography}
\end{document}